\tikzset{TL/.style={scale=.5, ineqn}}
\tikzset{JWP/.style={scale=.5, ineqn}}
\tikzset{TLnode/.style = {inner sep = 0, minimum size = 1cm}}
\tikzset{TL at/.style = {shift={($(#1)-(.5,.5)$)}, execute at begin scope = {\draw (0,0) rectangle (1,1);}}}
\tikzset{overstrand/.style={preaction={draw=white, -, line width=6pt}}}
\tikzset{normalizedvert/.style={circle,draw, minimum size = 1mm, inner sep = 0, fill=white}}
\tikzset{coordish/.style={minimum size=0, inner sep = 0, fill=black}}
\tikzset{nicelabel/.style={text height=1.5ex,text depth=.25ex}} 
\tikzset{horiznicelabel/.style={text width = .5em}} 
\tikzset{ineqn/.style={baseline = {($(current bounding box.center)-(0,1ex)$)}}}
\tikzset{ctrl/.style={controls = { #1 and #1} }}
\tikzset{fibabctree/.style={execute at begin picture={  
 \draw (-.6,0) -- node[auto,swap]{$a$} (0,0) coordinate(botvert) -- node[auto,swap]{$c$} (.6,0);
 \draw (botvert) -- node[auto,swap]{$b'$} (0,.5) coordinate(topvert);
 \foreach \sign in {+,-}
  \draw (topvert) -- +(\sign 1/6,.5) coordinate(leaf\sign);
}}}
\tikzset{fibacbase/.style={baseline=.5cm,nicelabel, execute at begin picture={  
 \draw (-.6,0) node[below right]{$a$} -- (-.25,0) coordinate(leftshadow) -- (0,0) coordinate(botvert) node[below]{$#1$} -- (.25,0) coordinate(rightshadow) -- (.6,0) node[below left]{$c$};
}}}
\tikzset{combspacing/.style = {row sep = .5cm, column sep = 1cm}}  
\tikzset{comb/.style = {combspacing, ampersand replacement = \&, row 1/.style=coordish, matrix of nodes, nodes in empty cells, nodes=draw}}
\title[Highly entangled, non-random subspaces of tensor products from quantum groups]
{Highly entangled, non-random subspaces of tensor products from quantum groups}
\author {Michael Brannan}
\address{Michael Brannan,
Department of Mathematics,
Mailstop 3368, Texas A\&M University, 
College Station, TX 77843-3368, USA}
\email{mbrannan@math.tamu.edu}
\author {Beno\^\i{}t Collins}
\address{Beno\^\i{}t Collins,
Department of Mathematics, Kyoto University,
and CNRS, Institut Camille Jordan Universit\'e  Lyon 1,
France}
\email{collins@math.kyoto-u.ac.jp}
\theoremstyle{plain}
\newtheorem{lemma}{Lemma}[section]
\newtheorem{theorem}[lemma]{Theorem}
\newtheorem{proposition}[lemma]{Proposition}
\newtheorem{corollary}[lemma]{Corollary}
\theoremstyle{definition}
\newtheorem*{definition}{Definition}
\theoremstyle{remark}
\newtheorem{remark}{Remark}
\newcommand{\N}{\mathbb N}
\newcommand{\C}{\mathbb C}
\newcommand{\G}{\mathbb G}
\newcommand{\T}{{\mathbb{T}}}
\newcommand{\mc}{\mathcal}
\begin{document}

\begin{abstract}
In this paper we describe a class of highly entangled subspaces of a tensor product of finite dimensional Hilbert spaces arising from the representation theory of free orthogonal quantum groups.  We determine their largest singular values and obtain lower bounds for the minimum output entropy of the corresponding quantum channels.  
An application  to the construction of $d$-positive maps on matrix algebras is also presented.
\end{abstract}

\maketitle

\section{Introduction}

Entanglement is one of the most important properties that differentiates quantum phenomena from classical phenomena. This property pertains to 
bi-partite or multi-partite systems. In a classical context, a multi-partite system is modeled by a Cartesian product of sets (e.g. of state spaces), whereas in the quantum context, where linear structures are required, 
the Cartesian product is replaced by the tensor product (of Hilbert spaces
describing each individual system).  For example, if $H_A$ and  $H_B$ describe the states of systems $A$ and $B$, then the bipartite system $AB$ is described by the Hilbert space tensor product $H=H_A\otimes H_B$.

Given a Hilbert space $H$, a (pure) state $\xi\in H$ is a vector of norm $1$, taken up to a phase factor. Equivalently, a pure state $\xi$ can be viewed as the rank one projection $\rho_\xi = |\xi \rangle \langle \xi|$ onto $\C \xi \subseteq H$ in
$\mc B(H)$. The (closed) convex hull of pure states is called the state-space of $H$, and denoted by $\mc D(H)$. This is a convex compact set, and its extremal points are the rank one projectors, i.e., pure states.  Given a bipartite system modeled by the Hilbert space tensor product $H = H_A \otimes H_B$, a state $\rho \in \mc D(H)$ is said to {\it separable} if it belongs to the convex hull of the set of product states $\rho = \rho_{A}\otimes \rho_{B}$, where $\rho_{A} \in \mc D(H_A)$ and  $\rho_{B}\in \mc D(H_B)$.   A state $\rho$ is called {\it entangled} if it is not separable.  We shall call a Hilbert subspace $H_0 \subset H_A \otimes H_B$ an {\it entangled subspace} if all of its associated pure states are entangled. 

For the sake of simplicity in this introduction (precise definitions will be given later), 
we say that a Hilbert subspace $H_0\subset H = H_A \otimes H_B$ is {\it highly entangled} if the set of pure states on $H$ associated to $H_0$ are uniformly ``far away'' from the set of product states $\rho_A \otimes \rho_B \in \mc D(H)$. 
A maximally entangled state on $H$ is usually called a {\it Bell state}:  All of the singular values (Schmidt-coefficients) associated to a Bell state are equal. 
It is easy to see that if the dimensions of $H_A$ and $H_B$ are equal, 
the only subspace $H_0 \subseteq H$ such that all its associated pure states are maximally entangled is a dimension one space $H_0 = \C \rho_{B}$ spanned by a Bell state $\rho_{B}$.  

Naturally, the larger the dimension of subspace $H_0 \subseteq H$, the less likely it will be highly entangled, as per the above definition of entanglement.  In recent years it has become a very important problem in Quantum Information Theory (QIT) to: {\it Find subspaces $H_0$ of  large relative dimension in a tensor product $H = H_A \otimes H_B$  such that all states are highly entangled}. 

As mentioned earlier, a possible quantification of entanglement is the distance to separable states. 
Another definition that is widely used is to take the minimum over all states $\xi \in H_0$ of the {\it entanglement entropy} $H(\rho_\xi)$, namely the entropy of the corresponding vectors of singular values (or Schmidt coefficients) associated to $\xi$. 
 This entropy is frequently the Shannon entropy, but it could be other entropies, e.g. the R\'enyi entropies.  We refer to Section \ref{prelim} for precise definitions.
 
Perhaps the most well-known example of a highly entangled vector subspace of large relative dimension is given by the antisymmetric subspace $H\wedge H \subset H\otimes H$ \cite{GrHoPa10}.
One can easily show that each state on $H\wedge H$ has entanglement entropy $\log 2$ and the relative dimension $\frac{\dim H\wedge H}{\dim H \otimes H}$ is approximately equal to $\frac12$. 
In order to make our problem concrete, we could say that we are interested in finding $H_0 \subset H = H_A \otimes H_B$ such that  the numerical quantity 
\begin{align}\label{quantity}\mc E_\mu =  \sup_{\xi \in H_0, \ \|\xi\|=1}\Big\{H(\rho_\xi)  +\mu(\log \dim H_0 -\log \dim H)\Big\} \qquad (0 < \mu < 1)\end{align}
is as large as possible (in particular, positive). 

One rich source of highly entangled subspaces comes from random techniques. The idea of studying random subspaces of tensor products dates back to the work of Hayden, Leung, Shor, Winter, Hastings \cite{HaLeShWi04, HaWi08, HaLeWi06, Ha09}, among others, and it was explored in great detail by Aubrun, Collins, Nechita, Belinschi, Szarek, Werner \cite{AuSzWe11, AuSzYe14, BeCoNe12, FuKi10}, and others.  Unfortunately, these highly random techniques provide no information on finding concrete examples that are predicted to exist by these methods.  Thus, there is a need for a systematic development of non-random examples of highly entangled subspaces.

An attempt was made in this direction by M. Al Nuwairan \cite{Al13,Al14}, by studying the entanglement of subrepresentations of tensor products of irreducible representations of the group $SU(2)$. It is likely that the representation theory of other non-commutative compact matrix groups would give other interesting examples, however $SU(2)$ is arguably {\it the} infinite compact non-commutative group whose representation theory is fully understood, thus allowing for a complete analysis of entanglement and QIT related questions.  In the context of $SU(2)$, Al Nuwairan shows that entanglement always achieved (except when one takes
the highest weight subrepresentation of a tensor product of $SU(2)$-irreducibles).  However, as is evidenced by the results in \cite[Section 3]{Al13}, a high degree of entanglement is unfortunately not achieved when working with $SU(2)$. 

The purpose of the present paper is to initiate the exploration of a new non-random technique to produce highly entangled subspaces.  To some extent, it follows the spirit of works by Al Nuwairan, but it shifts it from the
case of $SU(2)$ and other compact matrix groups to the world of {\it compact quantum groups},
and relies crucially on some geometric ideas surrounding a concept that is well known in the operator algebraic quantum group communitity, namely, the {\it property of rapid decay (RD)} for quantum groups. 

The property of rapid decay for quantum groups was introduced by Vergnioux \cite{Ve07}, and relates to the problem of estimating the $L^\infty$-norms of polynomial functions on quantum groups in terms of their (much easier to calculate) $L^2$-norms.  The operator algebraic notion of property RD has its origins in the groundbreaking work of Haagerup \cite{Ha78} on approximation properties of free group C$^\ast$-algebras.  Unlike in the case of ordinary groups, where property RD is connected to the combinatorial geometry of a discrete group $G$, in the quantum world, property RD was observed by Vergnioux to be intrinsically connected to the geometry of the relative position of a subrepresentation of a tensor product of irreducible  representations of a given quantum group.  More precisely, Vergnioux \cite[Section 4]{Ve07} points out that property RD for a given quantum group $\G$ is related to the following geometric  requirement: {\it Given any pair of irreducible representations $H_A, H_B$ of $\G$,  all multiplicity-free irreducible subrepresentations $H_0 \subset H_A \otimes H_B$ must be asymptotically far from the cone of decomposable tensors in $H_A \otimes H_B$}.

An exploration of this premonitory remark turns out to be extremely fruitful for a certain class of compact quantum groups, called the  {\it free orthogonal quantum groups} $(O_N^+)_{N \ge 3}$.  This remarkable class of quantum groups, introduced by Wang \cite{Wa95}, forms a centerpiece in the theory of C$^\ast$-algebraic compact quantum groups.  $O_N^+$ arises as a certain universal non-commutative deformation of the  function algebra on the classical real orthogonal group $O_N$, and has been the topic of much study over the past 20 years.  See, for example, the survey \cite{Br16} and the references therein.  One remarkable fact for our purposes, discovered by Banica \cite{Ba96}, is that the quantum groups $O_N^+$ have a unitary representation theory that closely parallels that of $SU(2)$.  In particular, the unitary irreducible representations of $O_N^+$ have the same fusion rules as $SU(2)$, and their construction is well understood in terms of the planar calculus of the Temperley-Lieb category \cite{KaLi94}.   This close parallel with $SU(2)$, on the one hand, allows for a highly computable framework (like one has for $SU(2)$).  On the other hand, the genuinely quantum  features of $O_N^+$ result in a much higher degree of entanglement in subrepresentations of tensor products, in comparison to what can be obtained for $SU(2)$.

For the free orthogonal quantum groups $(O_N^+)_{N \ge 3}$, we show that one can describe very precisely the largest singular values of states that appear in irreducible subrepresentations of tensor product representations (see Theorem \ref{ent-saturation}).  As a result we describe very precisely a new non-random class of subspaces of tensor products with the property of being highly 
entangled and of large relative dimension.  In particular we find deterministic examples of entangled subspaces of large relative dimension such that the quantity $\mc E_\mu$ defined in \eqref{quantity} is strictly positive for any $\mu < 1/2$.     We also deduce from our entanglement results some interesting properties for the class of quantum channels associated to these subspaces. We compute explicitly the $\mc S^1 \to \mc S^\infty$ norms of these channels, and obtain large lower bounds on their minimum output entropies (see Section \ref{app}).     

It is our hope that this paper will be a first step towards substantiating the claim that quantum groups form a rich well of entangled subspaces and quantum channels with interesting analytic properties.  
Along the way, we revisit the fundamental geometric inequality associated to the rapid decay property for $O_N^+$ (Proposition \ref{thm:RD}), and improve our understanding thereof. In particular, we show that entanglement inequality for property RD is essentially optimal for the free orthogonal quantum groups, and establish a higher-rank  generalization of it (Theorem \ref{ent-saturation}).

The remainder of our paper is organized as follows:  After this introduction, we recall in the first part of Section \ref{prelim} some concepts related to entangled subspaces, quantum channels, and minimum output entropy  of quantum channels.  The second half of Section \ref{prelim} introduces the free orthogonal quantum groups and describes aspects of their irreducible unitary representation theory that will be used in the sequel.  The main section of the paper is Section \ref{ent-subspace} where we study the entanglement of irreducible subrepresentations of  tensor products of $O_N^+$-representations.  There we prove the rapid decay inequality (Proposition \ref{thm:RD}) in the spirit of Vergnioux, establishing high entanglement for the subspaces under consideration (Theorem \ref{highest-weight}).  We then go on to strengthen this rapid decay inequality to a higher rank version (Theorem \ref{ent-saturation}) and discuss its optimality. 

Finally, we apply this strengthened rapid decay property in Section \ref{ap} to study the  quantum channels that are naturally associated to our entangled subspaces.  Here we obtain lower bounds for the MOE's of these channels (Corollary \ref{MOE-bounds}) and discuss their sharpness. 
In Section \ref{positive}, we use our entangled subspaces to construct new deterministic examples of $d$-positive maps between matrix algebras.

\subsection*{Acknowledgements}  The first author is grateful to Kyoto University and the University of Ottawa where parts of this work were completed during visits there.  The first author is also indebted to Jason Crann for stimulating communications on topics related to this paper.  B.C. was supported by NSERC discovery and accelerator grants, JSPS Kakenhi wakate B, and ANR-
14-CE25-0003.  

\section{Preliminaries} \label{prelim}

\subsection{Entangled subspaces of a tensor product}

Consider a pair of finite-dimensional complex Hilbert spaces  $H_A$ and $H_B$.  Any unit vector $\xi$ belonging to the tensor product Hilbert space $H_A \otimes H_B$ admits a {\it singular value decomposition}:   There are unique constants $\lambda_1 \ge \lambda_2 \ge \ldots\lambda_{d} \ge 0$ (with $d= \min\{\dim H_A, \dim H_B\}$) and  orthonormal systems $(e_i)_{i=1}^d \subset H_A$ and $(f_i)_{i=1}^d \subset H_B$ such that 
\[
\xi = \sum_{i=1}^d \sqrt{\lambda_i}e_i \otimes f_i.
\]  The sequence of numbers $(\lambda_i)_i$ is uniquely determined (as a multi-set) by $\xi$ and these numbers are called the {\it singular values (or Schmidt coefficients)} of $\xi$.  Note that $\|\xi\|^2 = \sum_{i=1}^d \lambda_i$.  

We shall call a non-zero vector $\xi \in H_A \otimes H_B$ {\it separable} if there exist vectors $\eta \in H_A$, $\zeta \in H_B$ such that $\xi = \eta \otimes \zeta$.  If $\xi$ is not separable, it is called {\it entangled}.  Note that a unit vector $\xi \in  H_A \otimes H_B$ is separable if and only if its corresponding sequence of Schmidt coefficients is $(1, 0, 0 , \ldots, 0)$.   We shall similarly call a  linear subspace $H_0 \subseteq H_A \otimes H_B$ separable (resp. entangled) if $H_0$ contains (resp. does not contain) separable  vectors.   We note that the maximally entangled unit vector $H_A \otimes H_B$ is the so-called  {\it Bell Vector (Bell state)} $\xi_B$, whose singular value decomposition is given by
\[
\xi_B =  \frac{1}{\sqrt{d}}\sum_{i=1}^d e_i \otimes f_i \qquad (d \ge 2).  
\] 

Note that the Schmidt coefficients of the Bell vector are given by $\Big(  \frac{1}{\sqrt{d}},  \frac{1}{\sqrt{d}}, \ldots,  \frac{1}{\sqrt{d}}, 0,0, \ldots\Big)$.  In particular, the largest Schmidt coefficient $\lambda_1$ associated to a unit vector $\xi \in H_A \otimes H_B$ is maximized at $1$ precisely when it is separable, and it is minimized at $d^{-1/2}$ when $\xi = \xi_B$ is the Bell state.  In this sense, the singular value decomposition is a useful tool for measuring measure how entangled a unit vector $\xi \in H_A \otimes H_B$ is: If $\lambda_1 << 1$, then $\xi$ is highly entangled.  With this in mind, we call a linear subspace $H_0 \subseteq H_A \otimes H_B$ {\it highly entangled} if the supremum of all maximal Schmidt coefficients associated to all unit vectors in $H_0$ is bounded away from one.   That is, 
\[
\sup_{\xi \in H_0, \ \|\xi\| = 1} \lambda_1 < < 1.  
\]
Equivalently, $H_0\subseteq H_A \otimes H_B$ is highly entangled if and only if 
\begin{align}\label{entangled}
\sup_{\|\xi\|_{H_0} = \|\eta\|_{H_A} = \|\zeta\|_{H_B} = 1}|\langle \xi| \eta \otimes \zeta\rangle| <<1.
\end{align}

\subsection{Quantum channels}

Given a finite dimensional Hilbert space $H$, denote by $\mc B(H)$ the C$^\ast$-algebra of bounded linear operators on $H$, and denote by $\mc D(H)\subseteq \mc B(H)$ the collection of {\it states on $H$}: positive semidefinite matrices $0 \le \rho \in \mc B(H)$ satisfying $\text{Tr}(\rho) = 1$, where $\text{Tr}$ denotes the canonical trace on $\mc B(H)$.  A state $\rho \in \mc D(H)$ is called a {\it pure state} if there exists a unit vector $\xi \in H$ so that $\rho$ is given by the rank-one projector $\rho_\xi  = |\xi \rangle \langle \xi |$.  We denote by $\mc S_1(H) \subseteq \mc B(H)$ the linear span of $\mc D(H)$, which is a Banach algebra with respect to the trace norm $\|\rho\|_{\mc S_1(H)} = \text{Tr}(|\rho|)$.    

Given two (finite-dimensional) Hilbert spaces $H_A$ and $H_B$, a {\it quantum channel} is a linear, completely positive and trace-preserving map (CPTP map) $\Phi:\mc B(H_A) \to \mc B(H_B)$ \cite{NiCh}.  By definition, we have $\Phi(\mc D(H_A)) \subseteq \mc D(H_B)$ for any quantum channel $\Phi$.  A natural model for the construction of quantum channels comes from subspaces of Hilbert space tensor products.  Given a triple of finite dimensional Hilbert spaces $(H_A, H_B, H_C)$ and an isometric linear map $\alpha_A^{B,C}:H_A \to H_B \otimes H_C$, we can form a complementary pair of quantum channels 
\begin{align*}
&\Phi_A^{\overline B, C}:\mc B(H_A) \to \mc B(H_C); \quad \Phi_A^{\overline B, C}(\rho) = (\text{Tr}_{H_B} \otimes \iota)(\alpha_A^{ B, C}\rho (\alpha_A^{ B, C})^*) \\
&\Phi_A^{ B, \overline C}:\mc B(H_A) \to \mc B(H_B); \quad \Phi_A^{ B,  \overline C}(\rho) = (\iota \otimes \text{Tr}_{H_C})(\alpha_A^{ B, C}\rho (\alpha_A^{B, C})^*). 
\end{align*}
Remarkably, every quantum channel can be expressed in the above form, thanks to the well known Stinespring Dilation Theorem for completely positive maps.  In other words, given any quantum channel $\Phi:\mc B(H_A) \to \mc B(H_B)$, the Stinespring Theorem guarantees the existence of an essentially unique {\it Stinespring pair} $(H_C, \alpha_A^{B,C})$, where $H_C$ is an auxiliary ``environment'' Hilbert space and $\alpha_A^{B,C}:H_A \to H_B \otimes H_C$ is a linear isometry, so that $\Phi = \Phi_A^{B, \overline C}$ in the above notation.  See \cite{HaWi08}, for example.

The {\it minimum output entropy (MOE)} of a quantum channel $\Phi:\mc B(H_A) \to \mc B(H_B)$ is given by
\[ 
H_{\min}(\Phi) := \min_{\rho \in \mc D(H_A)} H(\Phi(\rho)),
\]     
where $H(\cdot)$ denotes the {\it von Neumann entropy} of a state: $H(\rho) = -\text{Tr}(\rho\log \rho)$.  Note that by functional calculus, we have $H(\rho) = -\sum_i \lambda_i \log \lambda_i$, where $(\lambda_i)_{i} \subset [0,\infty)$ denotes the spectrum of $\rho$.   In other words, $H(\rho)$ is nothing but the Shannon entropy of the probability vector $(\lambda_i)_i$ corresponding to the eigenvalues of $\rho$. 

Since the von Neumann entropy functional $H(\cdot)$ is well-known to be convex, it follows that the MOE $H_{\min}(\Phi)$ is minimized on the extreme points of the compact convex set $\mc D(H_A)$, which corresponds to the set of all pure states on $H$.  In particular, 
\[
H_{\min}(\Phi) = \min_{\xi \in H_A, \ \|\xi\| = 1} H(\Phi( |\xi \rangle \langle \xi |)).
\] 
Using this fact together with the Stinespring Theorem, it can be shown that the $H_{\min}(\Phi)$ depends only on the relative position of the subspace $\alpha_A^{ B, C}(H_A) \subseteq H_B \otimes H_C$ coming from the Stinespring representation $\Phi = \Phi_A^{ B, \overline C}= (\iota \otimes \text{Tr}_{H_C})(\alpha_A^{ B, C}(\cdot) (\alpha_A^{B, C})^*)$.  Indeed, in this case, we have \begin{align*}
H_{\min}(\Phi) &= \min_{\xi \in H_A, \ \|\xi\| = 1} H(|\xi \rangle \langle \xi | ) = \min_{\xi \in H_A, \ \|\xi\| = 1} H((\iota \otimes \text{Tr}_C)(|\alpha_A^{ B, C}(\xi) \rangle \langle \alpha_A^{ B, C}(\xi) | ))  \\
&=  \min_{\xi \in H_A, \ \|\xi\| = 1} -\sum_i \lambda_i \log \lambda_i,
\end{align*}
where $ (\lambda_i)_i$ are the Schmidt coefficients of $\alpha_A^{ B, C}(\xi) = \sum_i \sqrt{\lambda_i} e_i \otimes f_i \in H_B \otimes H_C$.   In particular, $H_{\min}(\Phi)$ is zero if and only if $\alpha_A^{ B, C}(H_A) \subseteq H_B \otimes H_C$ is a separable subspace.

\subsection{Free orthogonal quantum groups and their representations}

In this section we give a very light overview of the free orthogonal quantum groups and some aspects of their finite dimensional representation theory.  Much of what we state below about quantum groups and their representations can be phrased in more general terms, however this will not be needed for our purpose.   The interested reader may refer to \cite{Ti08, Wo98}.

The main idea behind the concept of a free orthogonal quantum group is to formulate a  {\it non-commutative} version of the commutative $\ast$-algebra of complex-valued polynomial functions on the real orthogonal group $O_N$.  
It turns out that if one formulates such a non-commutative $\ast$-algebra in the right way, many of the nice group theoretic structures associated to that fact that $O_N$ is a compact group persist (e.g., a unique ``Haar measure'', a rich finite-dimensional unitary representation theory, a Peter-Weyl theorem, and so on).

\begin{definition}[Free Orthogonal Quantum Groups]
Let $N \ge 2$, let $A$ be a unital $\ast$-algebra over $\C$,and let $u = [u_{ij}]_{1 \le i,j \le N} \in M_N(A)$ be a matrix with entries in $A$.  Write $u^* = [u_{ji}^*] \in M_N(A)$ and $\bar u = [u_{ij}^*] \in M_N(A)$.  
\begin{enumerate}
\item The matrix $u$ is called a {\it quantum orthogonal matrix} if $u$ is invertible in $M_N(A)$, $u^* = u^{-1}$, and $\bar u = u$ 
\item The {\it free orthogonal quantum group} (of rank $N$) is given by the triple $O_N^+:=(\mc O(O_N^+), u, \Delta)$, where
\begin{enumerate}
\item  $\mc O(O_N^+)$ is the universal unital $\ast$-algebra generated by the coefficients $(u_{ij})_{1 \le i,j \le N}$  of a quantum orthogonal matrix $u = [u_{ij}] \in M_N(\mc O(O_N^+))$.
\item $\Delta:\mc O(O_N^+) \to \mc O(O_N^+)\otimes \mc O(O_N^+)$ is the unique unital $\ast$-algebra homomorphism, called the {\it co-product}, given by 
\[
\Delta(u_{ij}) = \sum_{k=1}^N u_{ik} \otimes u_{kj} \qquad (1 \le i,j \le N).
\] 
\end{enumerate}
\end{enumerate}
\end{definition}

\begin{remark}
In the above definition, we more precisely mean that $\mc O(O_N^+)$ is defined as the quotient $\ast$-algebra \[\mc O(O_N^+) := \C\langle X_{ij}, X_{ij}^* | 1 \le i,j \le N \rangle/ \mc I_N,\] where $\mc I_N \lhd  \C\langle X_{ij}| 1 \le i,j \le N \rangle$ is the ideal generated by the relations \[X_{ij} = X_{ij}^*, \quad \sum_k X_{ik}X_{jk}  = \sum_{k} X_{ki}X_{kj} = \delta_{i,j}1.\]  The isomorphism being given by $u_{ij} \mapsto X_{ij} + \mc I_N$.
\end{remark}

\begin{remark}
If we denote by $\mc J_N \lhd \mc O(O_N^+)$ the ideal generaated by all the commutators $[u_{ij}, u_{kl}]$, we obtain the abelianization of $\mc O(O_N^+)/ \mc J_N$  of $\mc O(O_N^+)$, which is isomorphic to $\mc O(O_N^+)$, the $\ast$-algebra of polynomial functions on the real orthogonal group $O_N$.  This identification is given by the map $\mc O(O_N^+) \owns u_{ij} + \mc J_N  \mapsto v_{ij} \in \mc O(O_N)$,  where $v = [v_{ij}] \in M_N(\mc O(O_N))$ forms the matrix of basic coordinate functions on $O_N$.  In this context, the co-product map $\Delta$ on $\mc O(O_N^+)$ factors through the quotient and induces a corresponding map $\Delta$ on $\mc O(O_N)$.  At this level, it is easily seen that $\Delta (f) (s,t) = f(st)$ for all $f \in \mc O(O_N)$ and $s,t \in O_N$.  That is, $\Delta$ reflects the group law on $\mc O(O_N)$ at the level of the function algebra $\mc O(O_N)$.  In this sense, we are justified in calling the quantum group $O_N^+$ a ``free analogue'' of the classical orthogonal group $O_N$.  
\end{remark}

We now turn to the concept of a representation of $O_N^+$.  A (finite-dimensional unitary) {\it representation} of $O_N$ is given by a finite dimensional Hilbert space $H_v$ and unitary matrix $v \in \mc O(O_N^+) \otimes \mc B(H_v)$ satisfying 
\[(\Delta \otimes \iota) v = v_{13}v_{23} \in \mc O(O_N^+) \otimes \mc O(O_N^+)  \otimes \mc B(H_v) ,\]
where above we use the standard leg numbering notation for linear maps on tensor products.  If we fix an orthonormal basis $(e_i)_{i=1}^d \subset H_u$, then we can write $v$ as the matrix $[v_{ij}] \in M_d(\mc O(O_N^+))$ with respect to this basis, and the above formula translates to \[\Delta v_{ij} =  \sum_{k=1}^d v_{ik} \otimes v_{kj} \qquad (1 \le  i,j \le d).\]  The first examples of representations of $O_N^+$ that come to mind are the one-dimensional {\it trivial representation} (which corresponds to the unit $1 \in \mc O(O_N^+) = M_1(\mc O(O_N^+))$) and the  $N$-dimensional {\it fundamental representation} $u = [u_{ij}] \in  M_N(\mc O(O_N^+))$ (corresponding to the matrix of generators for $\mc O(O_N^+)$).  Given two representations $v= [v_{ij}]$ and $w = [w_{kl}]$, we can naturally form their {\it direct sum} $v \oplus w \in \mc O(O_N^+)\otimes \mc B(H_v \oplus  H_w)$ and their {\it tensor product} $v \otimes w = v_{12}w_{13} = [v_{ij}w_{kl}]  \in \mc O(O_N^+)\otimes \mc B(H_v \otimes H_w)$ to obtain new examples of representations from old ones.  From a unitary representation $v = [v_{ij}]$, we may also form the {\it contragredient representation} $\bar v:= [v_{ij}^*] \in \mc O(O_N^+) \otimes \mc B(\overline{H_v})$.  

In order study the structure of various representations of $O_N^+$, we use the concept of intertwiner spaces.  Given two representations $u$ and $v$ of $O_N^+$, define the space of {\it intertwiners} between $u$ and $v$ as 
\[\text{Hom}(u,v) = \{T \in \mc B(H_u, H_v): (\iota \otimes T)u = v(\iota \otimes T).\}\] 
Two representations $u,v$ are called {\it equivalent} if $\text{Hom}(u,v)$ contains an invertible operator, and a representation $u$ is called {\it irreducible} if $\text{Hom}(u,u) = \C 1$.  It is known that every unitary representation of $O_N^+$ is equivalent to a direct sum of irreducible unitary representations. 

It is known from \cite{Ba96} that the irreducible corepresentations of $O_N^+$ can be labelled  $(v^k)_{k \in \N_0}$ (up to unitary equivalence) in such a way that $v^0 = 1$, $v^1 = u$ (the fundamental representation), and the following fusion rules hold:

\begin{align} \label{frules}
v^l \otimes v^m \cong v^{|l-m|} \oplus v^{|l-m|+2} \oplus \ldots \oplus v^{l+m} = \bigoplus_{0 \le r \le \min\{k,l\}} v^{l+m - 2r}.
\end{align}
Moreover the contragredient $\overline{v^k}$ of $v^k$ is unitarily equivalent to  $v^k$ for all $k$.  Denote by $H_k$ the Hilbert space associated to $v^k$.  Then $H_0 = \C$, $H_1 = \C^N$, and the dimensions $\dim H_k$ satisfy the recursion relations 
$\dim H_1 \dim H_k = \dim H_{k+1} + \dim H_{k-1}$.  Defining the quantum parameter \[q = q(N) := \frac{1}{N}\Big(\frac{2}{1+ \sqrt{1 -4/N^2}}\Big) \in (0,1],\] one can inductively show that the dimensions $\dim H_k$ are given by the {\it quantum integers} \[\dim H_k = [k+1]_q: = q^{-k}\Big(\frac{1-q^{2k+2}}{1-q^2}\Big) \qquad (N \ge 3).
 \]
When $N=2$, we have $q=1$, and then $\dim H_k = k+1 = \lim_{q \to 1^-} [k+1]_q$.  Note that for $N \ge 3$, we have the exponential growth asymptotic $[k+1]_q \sim N^k$.  For our purposes, this exponential growth is crucial and therefore we generally assume $N \ge 3$ in the sequel.

The explicit construction of the irreducible representation spaces $(H_k)_{k \in \N_0}$ proceeds as follows \cite{Ba96, VaVe07, BiDeVa06}.  Denote by $\{e_1, \ldots, e_N\} \subset H_1 = \C^N$ the standard orthonormal basis, and put $T_1 = \sum_{i=1}^N e_i \otimes e_i \in H_1 ^{\otimes 2}$.  Then one can readily check that $T_1 \in \text{Hom}(1,u\otimes u)$.  (I.e., $u^{\otimes 2}(1 \otimes T_1) = (1 \otimes T_1)$.)   Next, we consider the intertwiner space $\text{Hom}(u^{\otimes k}, u^{\otimes k}) \subseteq \mc B((\C^N)^{\otimes k})$, which can be shown to contain a unique non-zero self-adjoint projection $p_k$ ({\it the Jones-Wenzl projection}) with the defining property that 
\[
(\iota_{H_1^{\otimes i-1}} \otimes T_1 T_1^*\otimes  \iota_{H_1^{\otimes k-i-1}})p_k= 0 \qquad (1 \le i \le k-1).
\]
One then can identify $H_k$ with the ``highest weight'' subspace $p_k(H_1^{\otimes k}) \subseteq H_1^{\otimes k}$.  The existence and choice of terminology for $p_k$ comes from the fact that the intertwiner space $\text{Hom}(u^{\otimes k}, u^{\otimes k})$ is in fact isomorphic to 
a Temperley-Lieb algebra \cite{TeLi71}, and $p_k$ corresponds precisely to the Jones-Wenzl projection \cite{We87} under this isomorphism.  In particular, the {\it Wenzl recursion}
\[
p_1 = \iota_{H_1}, \quad p_k = \iota_{H_1} \otimes p_{k-1} - \frac{[k-1]_q}{[k]_q}(\iota_{H_1}\otimes p_{k-1})(T_1T_1^* \otimes \iota_{H_1^{\otimes k-2}})(\iota_{H_1}\otimes p_{k-1}) \qquad (k \ge 2)
\]
can be used to determine $p_k$.  In passing, we point out that the problem of obtaining explicit formulas for Jones-Wenzl projections (beyond the above recursion) has attracted a lot of attention over the years from various mathematical communities.  See \cite{BrCo16, Mo15, FrKh97} and the references therein.

We conclude this section with a description of the non-empty intertwiner spaces $\text{Hom}(v^k, v^l \otimes v^m)$ that arise from the fusion rules \eqref{frules}.  To begin, let us call a triple $(k,l,m) \in \N_0^3$ {\it admissible} if there exists an integer $0 \le r \le \min\{l,m\}$ such that $k = l+m - 2r$.  In other words, $(k,l,m) \in \N_0^3$ is admissible if and only if the tensor product representation $v^l \otimes v^m$ contains a (multiplicity-free) subrepresentation equivalent to $v^k$.  Fix an admissible triple $(k,l,m) \in \N_0^3$.  Then $\text{Hom}(v^k, v^l \otimes v^m) \subseteq \mc B(H_k, H_l \otimes H_m) \subseteq \mc B(H_1^{\otimes k}, H_1^{\otimes l} \otimes H_1^{\otimes m})$ is one-dimensional and is spanned by the following canonical non-zero intertwiner
\begin{align}\label{unnormal}
A_k^{l,m} =(p_l \otimes p_m)\Big(\iota_{H_{l-r}} \otimes T_r  \otimes \iota_{m-r}\Big)p_k ,
\end{align} 
where $T_r \in \text{Hom}(1, u^{\otimes 2r})$ is defined recursively from $T_1 = \sum_{i=1}^Ne_i \otimes e_i$ via 
$T_r = (\iota_{H_1} \otimes T_1 \otimes \iota_{H_1})T_{r-1}$.  The maps $A_{k}^{l,m}$ are well studied in the Temperley-Lieb recoupling theory \cite{KaLi94}, and are known there as {\it three-vertices}.  A three-vertex is typically diagrammatically represented as follows:
\begin{center}

	\begin{tikzpicture}[baseline=(current  bounding  box.center),
			wh/.style={circle,draw=black,thick,inner sep=.5mm},
			bl/.style={circle,draw=black,fill=black,thick,inner sep=.5mm}, scale = 0.2]
 \node(a) at (-4,4) [bl] {};
\node(c) at (4,4) [bl] {};
\node(d) at (0,-4) [bl] {};

\node  at (-5,0.5) {$A_{k}^{l,m} = $};	
\node  at (-5,5) {$l$};	
\node  at (5,5) {$m$};	
\node  at (-1,-5) {$k$};	

\draw [<-, color=blue]
		(-4,4) -- (0,0);

\draw [->, color=blue]
		(0,-0) -- (4,4);

\draw [-, color=blue]
		(0,-4) -- (0,0);

	\end{tikzpicture}

\end{center}

In order to find the unique $O_N^+$ equivariant isometry $\alpha_k^{l,m}:H_k \to H_l \otimes H_m$ (up to multiplication by $\T$), we compute the norm of $A_k^{l,m}$.  To do this, we define (following the terminology and diagrammatics from \cite{KaLi94}) the {\it $\theta$-net}
\[\theta_q(k,l,m) = \text{Tr}_{H_k}((A_k^{l,m})^*A_k^{l,m}) = \begin{tikzpicture}[baseline=(current  bounding  box.center),
			wh/.style={circle,draw=black,thick,inner sep=.5mm},
			bl/.style={circle,draw=black,fill=black,thick,inner sep=.5mm}, scale = 0.2]
 \node(a) at (-4,4) [bl] {};
\node(c) at (4,4) [bl] {};

\node(d) at (0,-4) [bl] {};
\node at (0,12) [bl] {};

\node  at (-5,5) {$l$};	
\node  at (5,5) {$m$};	
\node  at (-1,-5) {$k$};	
\node at (-1,13) {$k$};

\draw [<-, color=blue]
		(-4,4) -- (0,0);

\draw [->, color=blue]
		(0,-0) -- (4,4);

\draw [-, color=blue]
		(0,-4) -- (0,0);
		
\draw [-, color=blue]
		(-4,4) -- (0,8);
		
\draw [-, color=blue]
		(4,4) -- (0,8);
		
\draw [-, color=blue]
		(0,8) -- (0,12);		

\draw [-, color=blue]
		(0,12) to [bend right] (0,-4);
	\end{tikzpicture}.\] 
Since $A_k^{l,m}$ is a multiple of an isometry, it easily follows that $\|A_k^{l,m}\|^2[k+1]_q = \theta_q(k,l,m)$.  $\theta$-net evaluations are well known \cite{KaLi94, Ve05, VaVe07}, and are given by  
\begin{align}\label{theta}
\theta_q(k,l,m):=\frac{[r]_q![l-r]_q![m-r]_q![k+r+1]_q!}{[l]_q![m]_q![k]_q!},
\end{align}   
where $k=l+m - 2r$ and $[x]_q! = [x]_q[s-1]_q \ldots [2]_q[1]_q$ denotes the quantum factorial.  We thus arrive at the following formula for our isometry $\alpha_k^{l,m}$:
\begin{align}\label{alpha}
\alpha_k^{l,m} = \|A_k^{l,m}\|^{-1}A_k^{l,m} = \Big(\frac{[k+1]_q}{\theta_q(k,l,m)}\Big)^{1/2} A_k^{l,m}.
\end{align}

\begin{remark} \label{upbound}
We conclude this section by observing the following useful inequality for $\theta$-nets: 
\[
\frac{[r+1]_q[k+1]_q}{\theta_q(k,l,m)} \ge 1,
\]
which holds for any admissible triple $(k,l,m) \in \N_0^3$ with $k = l+m -2r$.  This follows from the observation that for a three-vertex $A_k^{l,m}$,
\[
A_k^{l,m} =(p_l \otimes p_m)\Big(\iota_{H_{l-r}} \otimes T_r  \otimes \iota_{m-r}\Big)p_k = (p_l \otimes p_m)\Big(\iota_{H_{l-r}} \otimes (p_r \otimes p_r)T_r  \otimes \iota_{m-r}\Big)p_k,
\] 
\[\implies \|A_k^{l,m}\|^2 \le  \|(p_r \otimes p_r)T_r\|^2  =\text{Tr}(p_r)  = \dim H_r =  [r+1]_q. 
\]
\end{remark}

\section{Entanglement for subrepresentations of $O_N^+$ and rapid decay inequalities}
\label{ent-subspace}

In this section we begin our study of the entanglement geometry of irreducible subrepresentations of tensor products of irreducible representations of $O_N^+$.  The general setup we will consider is a fixed $N \ge 3$ and an admissible triple $(k,l,m) \in \N_0^3$.  This corresponds to irreducible representations $(v^k,v^l,v^m)$ of $O_N^+$ with corresponding representation Hilbert spaces $(H_k,H_l,H_m)$, and a $O_N^+$-equivariant isometry $\alpha_k^{l,m}:H_k \to H_l \otimes H_m$ as constructed in the previous section.  Recall that we set $q = \frac{1}{N}\Big(\frac{2}{1+ \sqrt{1 -4/N^2}}\Big) \in (0,1)$.  Our main interest is to study the entanglement of the $\alpha_k^{l,m}(H_k) \subseteq H_l \otimes H_m$, and the following proposition gives a measure of this.

\begin{proposition} \label{thm:RD}  Fix $N \ge 3$ and let $(k,l,m) \in \N_0^3$ be an admissible triple.  Then
for any unit vectors $\xi \in H_k, \eta \in H_l, \zeta \in H_m$, we have
\begin{align*}
|\langle \alpha_k^{l,m}(\xi)|\eta \otimes \zeta \rangle| \le \Big(\frac{[k+1]_{q}}{\theta_{q}(k,l,m)}\Big)^{1/2} \le   C(q) q^{\frac{l+m-k}{4}}, 
\end{align*}
where 
\[
C(q) =  (1-q^2)^{-1/2}\Big(\prod_{s=1}^\infty \frac{1}{1-q^{2s}}\Big)^{3/2}
\]
\end{proposition}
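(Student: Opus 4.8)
The plan is to establish the two inequalities in the proposition separately. The first inequality, $|\langle \alpha_k^{l,m}(\xi)|\eta\otimes\zeta\rangle| \le ([k+1]_q/\theta_q(k,l,m))^{1/2}$, is essentially a restatement of the fact that $\alpha_k^{l,m}$ is a rescaled three-vertex. First I would write $\alpha_k^{l,m} = \|A_k^{l,m}\|^{-1} A_k^{l,m}$ using \eqref{alpha}, and observe that $A_k^{l,m} = (p_l\otimes p_m)(\iota_{H_{l-r}}\otimes T_r\otimes\iota_{m-r})p_k$. Since $\xi\in H_k = p_k(H_1^{\otimes k})$ and $\eta\otimes\zeta$ is a unit vector in $H_l\otimes H_m = (p_l\otimes p_m)(H_1^{\otimes l}\otimes H_1^{\otimes m})$, and since $p_l\otimes p_m$ and $p_k$ are self-adjoint projections, I would move these projections onto $\eta\otimes\zeta$ and $\xi$ respectively, so that $\langle A_k^{l,m}\xi | \eta\otimes\zeta\rangle = \langle (\iota\otimes T_r\otimes\iota)\xi | \eta\otimes\zeta\rangle$. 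Bounding this by $\|(\iota\otimes T_r\otimes\iota)\xi\| \le \|T_r\| = \|T_1\|^r = N^{r/2}$ — or more efficiently by using that $A_k^{l,m}$ itself is a positive scalar multiple of the isometry — gives the bound; together with $\|A_k^{l,m}\|^2 [k+1]_q = \theta_q(k,l,m)$ the first inequality follows by Cauchy–Schwarz applied to the isometry $\alpha_k^{l,m}$.

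The second inequality, $([k+1]_q/\theta_q(k,l,m))^{1/2} \le C(q) q^{(l+m-k)/4}$, is the genuinely quantitative part. Writing $k = l+m-2r$ so that $l+m-k = 2r$, the target becomes $[k+1]_q/\theta_q(k,l,m) \le C(q)^2 q^r$. Plugging in the explicit $\theta$-net formula \eqref{theta}, this amounts to
\[
\frac{[k+1]_q [l]_q! [m]_q! [k]_q!}{[r]_q! [l-r]_q! [m-r]_q! [k+r+1]_q!} \le C(q)^2 q^r.
\]
The strategy is to bound each quantum factorial ratio using the two-sided estimates on $[n]_q = q^{-n}\frac{1-q^{2n+2}}{1-q^2}$. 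The key elementary facts are that $q^{-n}(1-q^2)^{-1} \le [n+1]_q \le q^{-n}(1-q^2)^{-1}$... more precisely $[n+1]_q$ lies between $q^{-n}$ and $q^{-n}(1-q^2)^{-1}$, and that $[n+1]_q! = q^{-n(n+1)/2}\prod_{j=1}^{n+1}\frac{1-q^{2j}}{1-q^2}$. I would track the powers of $q$ carefully: the $q$-exponent in the numerator of $[k+1]_q[l]_q![m]_q![k]_q!$ versus the denominator $[r]_q![l-r]_q![m-r]_q![k+r+1]_q!$ should, after using $k=l+m-2r$, collapse to exactly $q^r$ (this is the arithmetic identity behind property RD), and the remaining product of $(1-q^{2j})$ factors must be bounded by $C(q)^2$. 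For the latter, each factor $\frac{1-q^{2j}}{1-q^2}$ in a numerator factorial is $\le (1-q^2)^{-1}$... actually I need a uniform-in-parameters bound, so I would use $\prod_{j\ge 1}(1-q^{2j})^{-1}$ as the universal dominating constant for the ``extra'' factors that do not cancel, and $(1-q^2)^{-1}$ for one leftover factor, yielding precisely $C(q)^2 = (1-q^2)^{-1}\prod_{s\ge 1}(1-q^{2s})^{-3}$.

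I expect the main obstacle to be the bookkeeping in the second inequality: verifying that the powers of $q$ cancel down to exactly $q^r$, and — more subtly — organizing the leftover $(1-q^{2j})$ factors so that at most three infinite products $\prod_s (1-q^{2s})^{-1}$ (plus one $(1-q^2)^{-1}$) are needed to dominate them uniformly over all admissible $(k,l,m)$. Concretely, I would write each factorial as $[n+1]_q! = q^{-n(n+1)/2}(1-q^2)^{-(n+1)}\prod_{j=1}^{n+1}(1-q^{2j})$, substitute into the ratio, and separate the expression into (i) a pure power of $q$, (ii) a pure power of $(1-q^2)^{-1}$, and (iii) a ratio of finite products $\prod(1-q^{2j})$. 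Part (i) should give $q^r$ by the quadratic identity $(k+1)k/2 + \ldots$ simplifying via $k=l+m-2r$; part (ii) should give $(1-q^2)^{-1}$; and part (iii) should be bounded above by $\prod_{s=1}^\infty(1-q^{2s})^{-3}$ since the numerator finite products telescope into at most three ``tails'' each dominated by the full infinite product while the denominator finite products are $\le 1$. Assembling these three pieces yields the stated bound with the claimed constant $C(q)$.
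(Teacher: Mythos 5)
Your argument for the first inequality --- which is where the actual content of the proposition lies --- does not close. Both routes you propose yield only the trivial bound. If you estimate $|\langle(\iota\otimes T_r\otimes\iota)\xi\,|\,\eta\otimes\zeta\rangle|\le\|(\iota\otimes T_r\otimes\iota)\xi\|\le\|T_r\|=N^{r/2}$, then after multiplying by $\|A_k^{l,m}\|^{-1}=\big([k+1]_q/\theta_q(k,l,m)\big)^{1/2}\approx N^{-r/2}$ you obtain a bound of order $1$, not the claimed $\big([k+1]_q/\theta_q(k,l,m)\big)^{1/2}\approx q^{r/2}$. Likewise, ``Cauchy--Schwarz applied to the isometry $\alpha_k^{l,m}$'' gives $|\langle\alpha_k^{l,m}\xi|\eta\otimes\zeta\rangle|\le\|\alpha_k^{l,m}\xi\|\,\|\eta\otimes\zeta\|=1$, again trivial. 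The missing idea is that the \emph{product} structure of $\eta\otimes\zeta$ must be played off against the cup $T_r$. Apply Cauchy--Schwarz the other way, $|\langle\xi\,|\,(\iota\otimes T_r^*\otimes\iota)(\eta\otimes\zeta)\rangle|\le\|(\iota\otimes T_r^*\otimes\iota)(\eta\otimes\zeta)\|$, expand $\eta=\sum_i\eta_i\otimes e_i$ and $\zeta=\sum_j e_j\otimes\zeta_j$ over the standard basis of $H_1^{\otimes r}$, and note that $T_r^*$ kills all but the ``diagonal'' terms, so $(\iota\otimes T_r^*\otimes\iota)(\eta\otimes\zeta)=\sum_i\eta_i\otimes\zeta_{\check{i}}$; the triangle inequality and Cauchy--Schwarz in the index $i$ then give a norm at most $\big(\sum_i\|\eta_i\|^2\big)^{1/2}\big(\sum_i\|\zeta_{\check{i}}\|^2\big)^{1/2}=\|\eta\|\,\|\zeta\|=1$. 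This is precisely why the operator norm $N^{r/2}$ of $T_r^*$ is not felt on decomposable vectors, and it is the mechanism behind the rapid decay inequality; without it the first inequality is not proved.

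Your treatment of the second inequality is sound and is essentially the paper's computation with a different bookkeeping. The paper organizes the ratio as $\frac{1}{[r+1]_q}\prod_{s=1}^r\frac{[1+s]_q[l-r+s]_q[m-r+s]_q}{[k+1+s]_q[s]_q^2}$, checks that the powers of $q$ cancel within each factor because $k=l+m-2r$, bounds the remaining $(1-q^{2\cdot})$ factors by $\prod_{s\ge1}(1-q^{2s})^{-3}$, and extracts $q^{r}(1-q^2)^{-1}$ from $1/[r+1]_q$. Your organization via $[n]_q!=q^{-n(n-1)/2}(1-q^2)^{-n}\prod_{j=1}^n(1-q^{2j})$ works just as well (the quadratic $q$-exponent does collapse to $q^{2r}$ under $k=l+m-2r$, and the $(1-q^2)$-powers in fact cancel exactly), so it even yields a slightly smaller constant than $C(q)^2$; since you only need an upper bound, that is harmless.
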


\begin{remark}
We note that the bound $C(q)q^{\frac{l+m-k}{4}}$ appearing in Proposition \ref{thm:RD} is equivalent, 
as $N$ is large,
to the fourth root of the relative dimension, $\Big(\frac{\dim H_k}{\dim H_l \dim H_m}\Big)^{1/4}$.
\end{remark}

\begin{proof}[Proof of Proposition \ref{thm:RD}]  Fix $\eta \in H_l$ and $\zeta \in H_m$.  We identify $H_l$ with the highest weight subspace of $H_{l-r} \otimes H_1^{\otimes r}$ and similarly we identify $H_m$ with the highest weight subspace of $H_1^{\otimes r} \otimes H_{m-r}$.  I.e, $H_l = p_l(H_{k-r} \otimes H_1^{\otimes r})$ and $H_m = p_m(H_1^{\otimes r} \otimes H_{m-r})$, where $p_l, p_m$ are the corresponding Jones-Wenzl projections.  Then we can uniquely express 
\[\eta = \sum_{i:[r] \to [N]} \eta_i \otimes e_i \quad \& \quad \zeta = \sum_{i:[r]\to [N]} e_i \otimes \zeta_i,  \]  
where $\eta_i \in H_{l-r},$ $\zeta_i \in H_{m-r}$, and $\{e_i = e_{i(1)} \otimes \ldots e_{i(r)}\}_{i:[r]\to [N]}$ is the standard orthonormal basis for $H_1^{\otimes r}$.  (Here $[x] = \{1,2,\ldots, x\}$.) Recalling formula \eqref{alpha}, we have  \[\alpha_{k}^{l,m} = \Big(\frac{[k+1]_{q}}{\theta_{q}(k,l,m)}\Big)^{1/2}(p_l \otimes p_m)\Big(\iota_{H_{l-r}} \otimes T_r  \otimes \iota_{m-r}\Big)p_k.\]  Noting that $T_r =  \sum_{i:[r] \to [N]}e_i \otimes e_{\check{i}}$, where $\check{i}:[r]\to [N]$ is defined by $\check{i}(s) = i(r-s+1)$.  Using this formula, we then obtain 
\begin{align*}(\alpha_k^{l,m})^*(\eta \otimes \zeta) &=  \Big(\frac{[k+1]_{q}}{\theta_{q}(k,l,m)}\Big)^{1/2} p_k\Big(\sum_{i,j:[r] \to [N]}  T_r^*(e_i \otimes e_j)  \eta_i \otimes \zeta_j  \Big) \\
&=   \Big(\frac{[k+1]_{q}}{\theta_{q}(k,l,m)}\Big)^{1/2}  p_k\Big(\sum_{i:[r] \to [N]} \eta_i \otimes  \zeta_{\check{i}}\Big).
\end{align*}
Applying the triangle and Cauchy-Schwarz inequality to the above expression, we then obtain
\begin{align*}
\|(\alpha_k^{l,m})^*(\eta \otimes \zeta)\|& \le  \Big(\frac{[k+1]_{q}}{\theta_{q}(k,l,m)}\Big)^{1/2} \Big(\sum_{i} \|\eta_i\|^2\Big)^{1/2}\Big(\sum_{i}\|\zeta_{\check{i}}\|\Big)^{1/2}\\
& =\Big(\frac{[k+1]_{q}}{\theta_{q}(k,l,m)}\Big)^{1/2}\|\eta\|_{H_l}\|\zeta\|_{H_m},
\end{align*} 
giving the first inequality.
Next, we observe that with $k = l+m - 2r$, we have from \eqref{theta},
\begin{align*}
\frac{[k+1]_{q}}{\theta_{q}(k,l,m)} &=\frac{[l]_q![m]_q![k +1]_q!}{[r]_q![l-r]_q![m-r]_q![k+r+1]_q!} \\
&=\frac{1}{[r+1]_q}\prod_{s=1}^r \frac{[1+s]_q[l-r+s]_q[m-r+s]_q}{[k+1+s]_q[s]_q^2} \\
&= \frac{1}{[r+1]_q} \prod_{s=1}^r \frac{(1-q^{2+2s})(1-q^{2l-2r+2s})(1-q^{2m-2r+2s})}{(1-q^{2k+2+2s})(1-q^{2s})^2} \\
& \le \frac{1}{[r+1]_q} \Big(\prod_{s=1}^r \frac{1}{1-q^{2s}}\Big)^3 \le \frac{1}{[r+1]_q} \Big(\prod_{s=1}^\infty \frac{1}{1-q^{2s}}\Big)^3. 
\end{align*}
To obtain the second inequality, we just observe that 
\begin{align} \label{ineq-qd}q^r \le \frac{1}{[r+1]_q} = \frac{q^r(1-q^2)}{1-q^{2r+2}} \le q^r(1-q^2)^{-1},  
\end{align}
which gives \[|\langle \alpha_k^{l,m}(\xi)|\eta \otimes \zeta \rangle| \le q^{r/2}(1-q^2)^{-1/2}\Big(\prod_{s=1}^\infty \frac{1}{1-q^{2s}}\Big)^{3/2} = C(q)q^{\frac{l+m-k}{4}}. \]
\end{proof}

Proposition \ref{thm:RD} can be interpreted as giving a general upper bound on the largest Schmidt coefficient of a unit vector belonging to the subspace $\alpha_k^{l,m}(H_k) \subseteq H_l \otimes H_m$.  That is, if $\xi \in H_k$ is a unit vector and $\alpha_k^{l,m}(\xi)$ is represented by its singular value decomposition 
\[
\alpha_k^{l,m}(\xi) = \sum_{i} \sqrt{\lambda_i} e_i \otimes f_i,
\]
with $(e_i)_i\subset H_l, (f_i)_i\subset H_m$ orthonormal systems, and $\lambda_i \ge 0$ satisfy $\sum_i \lambda_i = 1$, then 
\begin{align}\label{SC-bound}
\max_i \lambda_i \le C(q)^2q^{\frac{l+m-k}{2}}.
\end{align}
Since the above quantity is much smaller than $1$ when $k <  l+m$, we conclude that $\alpha_k^{l,m}(H_k)$ is ``far'' from containing containing separable unit vectors of the form $\eta \otimes \zeta \in H_l \otimes H_m$.  That is, $\alpha_k^{l,m} \subset H_l \otimes H_m$ is highly entangled.    We summarize this in the following theorem.

\begin{theorem} \label{highest-weight}
For $k,l,m$ as above, the subspaces $\alpha_k^{l,m}(H_k) \subseteq H_l \otimes H_m$ are (highly) entangled provided $k < l+m$.  When $k=l+m$, the highest weight subspace $\alpha_{l+m}(H_{l+m}) \subset H_l \otimes H_k$ is a separable subspace.  
\end{theorem}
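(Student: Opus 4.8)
The statement has two halves. The first ($k<l+m$) is essentially a restatement of Proposition~\ref{thm:RD}: by admissibility $l+m-k=2r$ with $r\ge 1$, so that proposition together with the reformulation \eqref{entangled} of high entanglement reduces the claim to the strict bound $[k+1]_q/\theta_q(k,l,m)<1$. This drops out of the product identity for $[k+1]_q/\theta_q(k,l,m)$ established inside the proof of Proposition~\ref{thm:RD}. Rearranging it (using $[r+1]_q=q^{-r}(1-q^{2r+2})/(1-q^2)$), one finds $[k+1]_q/\theta_q(k,l,m)=q^{r}\bigl(\prod_{t=l-r+1}^{l}(1-q^{2t})\bigr)\bigl(\prod_{t=m-r+1}^{m}(1-q^{2t})\bigr)\big/\bigl(\prod_{t=1}^{r}(1-q^{2t})\bigr)\bigl(\prod_{t=k+2}^{k+r+1}(1-q^{2t})\bigr)$; bounding the two numerator products above by $1$ and the two denominator products below by $(1-q^{2})^{r}$ and $(1-q^{4})^{r}$ gives $[k+1]_q/\theta_q(k,l,m)\le\bigl(q/((1-q^{2})(1-q^{4}))\bigr)^{r}<1$, the last inequality because $q=q(N)\le q(3)=(3-\sqrt5)/2$ forces $(1-q^{2})(1-q^{4})>q$. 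The sharper ``highly entangled'' assertion is then nothing but \eqref{SC-bound}.

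Now the substantive half, $k=l+m$. Here $r=0$, so $T_0=1$, formula \eqref{unnormal} collapses to $A_{l+m}^{l,m}=(p_l\otimes p_m)\,p_{l+m}$, and \eqref{theta}--\eqref{alpha} give $\|A_{l+m}^{l,m}\|=1$. Because a three-vertex is always a scalar multiple of an isometry (as recalled just before \eqref{theta}), being an actual isometry forces the image $p_{l+m}(H_1^{\otimes(l+m)})=H_{l+m}$ to lie inside $p_l(H_1^{\otimes l})\otimes p_m(H_1^{\otimes m})=H_l\otimes H_m$ and identifies $\alpha_{l+m}^{l,m}$ with the inclusion $H_{l+m}\hookrightarrow H_l\otimes H_m$. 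So $\alpha_{l+m}^{l,m}(H_{l+m})=H_{l+m}$, and everything boils down to producing a nonzero \emph{separable} vector inside the subspace $H_{l+m}\subseteq H_1^{\otimes l}\otimes H_1^{\otimes m}$.

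I will do this with a ``reduced word''. The key observation is that for any indices $i_1,\dots,i_n\in\{1,\dots,N\}$ with $i_j\ne i_{j+1}$ for every $j$, the vector $w=e_{i_1}\otimes\cdots\otimes e_{i_n}$ satisfies $p_n w=w$, i.e.\ $w\in H_n$. This is a one-line induction from the Wenzl recursion: if $w=e_{i_1}\otimes w'$ with $w'$ again reduced, then $p_{n-1}w'=w'$ gives $(\iota_{H_1}\otimes p_{n-1})w=w$, while $(T_1T_1^{*}\otimes\iota_{H_1^{\otimes n-2}})w=T_1^{*}(e_{i_1}\otimes e_{i_2})\,T_1\otimes e_{i_3}\otimes\cdots=\delta_{i_1 i_2}(\cdots)=0$ since $i_1\ne i_2$, so the correction term in the recursion vanishes. (Equivalently this is the standard identification of $\Img p_n$ with the ``turn-back-free'' subspace of $H_1^{\otimes n}$, cf.\ \cite{KaLi94}.) Now take $n=l+m$ and let the $i_j$ alternate between the values $1$ and $2$ (possible since $N\ge 2$). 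Then $w=e_{i_1}\otimes\cdots\otimes e_{i_{l+m}}\in H_{l+m}$, and it splits as $w=(e_{i_1}\otimes\cdots\otimes e_{i_l})\otimes(e_{i_{l+1}}\otimes\cdots\otimes e_{i_{l+m}})$, where each of the two factors is itself a reduced word and hence lies in $H_l$, respectively $H_m$. Thus $w$ is a nonzero separable vector of $H_l\otimes H_m$ contained in $\alpha_{l+m}^{l,m}(H_{l+m})$, so the latter is a separable subspace. (When $l=0$ or $m=0$ the assertion is trivial.)

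The real content is this second half, and the one point there that requires care is the opening identification of $\alpha_{l+m}^{l,m}$ with the inclusion $H_{l+m}\hookrightarrow H_l\otimes H_m$ --- equivalently, the nesting $p_{l+m}\le p_l\otimes p_m$ of Jones--Wenzl projections. Once this is granted, the reduced-word construction finishes the proof using only the Wenzl recursion already on hand, with no further input.
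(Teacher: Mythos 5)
Your proof is correct and follows essentially the same route as the paper: the separable vector in the highest-weight case is exactly the alternating elementary tensor $e_1\otimes e_2\otimes e_1\otimes\cdots \in H_{l+m}$, shown to be fixed by the Jones--Wenzl projection via the Wenzl recursion and to split as a tensor product of two such words in $H_l$ and $H_m$. The only differences are cosmetic --- you additionally verify quantitatively that $[k+1]_q/\theta_q(k,l,m)<1$ when $r\ge 1$ and make explicit the identification of $\alpha_{l+m}^{l,m}$ with the inclusion $H_{l+m}\hookrightarrow H_l\otimes H_m$, both of which the paper leaves implicit.
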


\begin{proof}
The first statement follows immediately from the estimate given by Proposition \ref{thm:RD}.  For the second statement, we exhibit an example of a separable vector in $\alpha_{l+m}^{l,m}(H_{l+m})$.  Observe that if $(e_i)_i$ is an orthonormal basis for $H_1 = \C^N$, then for any $i\ne j$, the elementary tensor $\eta_r(i,j) = e_i \otimes e_j \otimes e_i \otimes \ldots \in H_1^{\otimes r}$ actually belongs to $H_r$.  Indeed, a simple inductive application of the Wenzl recursion formula shows that $p_r\eta_r(i,j) = \eta_r(i,j)$.  Consider now the tensor product vector $\eta_l(i,j) \otimes \eta_m \in H_l \otimes H_m$, where $\eta_m = \eta_m(i,j)$ if $l$ is even and $\eta_m = \eta_m(j,i)$ if $l$ is odd.  Then $\eta_{l}(i,j) \otimes \eta_m = \eta_{l+m}(i,j) = \alpha_{l+m}^{l,m}(\eta_{l+m}(i,j)) \in \alpha_{l+m}^{l,m} (H_{l+m})$ is separable. 
\end{proof}

\subsection{Optimal entanglement estimates and a higher rank rapid decay inequality}

In this section we will investigate to what extend the Schmidt coefficient upper bound \eqref{SC-bound} given by  Proposition \ref{thm:RD} is optimal. It turns out that this bound is in fact optimal in a very strong sense:  For any $d \in \N$, we can find a unit vector $\xi \in H_k$ (provided $N$ is sufficiently large) with the property that  $\alpha_k^{l,m}(\xi)$ admits at least $d$ Schmidt coefficients with the same magnitude as that predicted by \eqref{SC-bound}.  To obtain this optimality result, we first consider a higher rank version of the rapid decay inequality of Proposition \ref{thm:RD}. In the following, we show that if we replace the rapid decay inequality by its ``na\"ive'' extension resulting from the triangle inequality, the resulting upper bound is optimal -- at least for $N$ sufficiently large.

\begin{proposition} \label{higher-rank}
Fix $N \ge 3$ and an admissible triple $(k,l,m) \in \N_0^3$.  For any finite sequences $(\eta_i)_{i=1}^d \subset H_l$ and $(\zeta_i)_{i=1}^d \subset  H_m$, we have
\[\Big\|(\alpha_k^{l,m})^*\Big(\sum_{i=1}^d \eta_i \otimes \zeta_i\Big)\Big\| \le C(q) q^{\frac{l+m-k}{4}} \sum_{i=1}^d \|\eta_i\|\|\zeta_i\|.\]
Moreover, if $d \le (N-2)(N-1)^{\frac{l+m-k-2}{2}}$, then there exist orthonormal systems $(\eta_i)_{i=1}^d \subset H_l$ and $(\zeta_i)_{i=1}^d \subset H_m$ such that   
\[\Big\|(\alpha_k^{l,m})^*\Big(\sum_{i=1}^d \eta_i \otimes \zeta_i\Big)\Big\| = d \Big(\frac{[k+1]_{q}}{\theta_{q}(k,l,m)}\Big)^{1/2} \ge dq^{\frac{l+m-k}{4}}.\]
\end{proposition}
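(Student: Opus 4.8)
The plan is to treat the two assertions in turn; the first is essentially a restatement of Proposition~\ref{thm:RD}. Scaling the unit-vector estimate there, one gets for all $\eta\in H_l$, $\zeta\in H_m$ that $\|(\alpha_k^{l,m})^*(\eta\otimes\zeta)\|\le C(q)q^{\frac{l+m-k}{4}}\|\eta\|\,\|\zeta\|$; in fact this intermediate bound is exactly what the Cauchy--Schwarz step in the proof of Proposition~\ref{thm:RD} already delivers. Applying the triangle inequality to $(\alpha_k^{l,m})^*\big(\sum_{i=1}^d\eta_i\otimes\zeta_i\big)=\sum_{i=1}^d(\alpha_k^{l,m})^*(\eta_i\otimes\zeta_i)$ then yields the first displayed inequality.

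For the optimality statement, set $r=\tfrac{l+m-k}{2}$; we may assume $r\ge 1$, since otherwise $(N-2)(N-1)^{-1}<1$ leaves no admissible $d$. The key point is that saturating the triangle inequality forces all the vectors $(\alpha_k^{l,m})^*(\eta_i\otimes\zeta_i)$ to be a common positive multiple of a single unit vector, and we realize this by a ``padding'' construction. Recall from the proof of Proposition~\ref{thm:RD} the identity
\[
(\alpha_k^{l,m})^*(\eta\otimes\zeta)=\Big(\frac{[k+1]_q}{\theta_q(k,l,m)}\Big)^{1/2}p_k\Big(\sum_{j:[r]\to[N]}\eta_j\otimes\zeta_{\check j}\Big),
\]
valid under the identifications $H_l=p_l(H_{l-r}\otimes H_1^{\otimes r})$, $H_m=p_m(H_1^{\otimes r}\otimes H_{m-r})$, where $\eta=\sum_j\eta_j\otimes e_j$ and $\zeta=\sum_j e_j\otimes\zeta_j$. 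Fix a word $w=(c_1,\dots,c_k)\in[N]^k$ with no two consecutive letters equal, so that (by the Wenzl-recursion argument used in the proof of Theorem~\ref{highest-weight}) $e_w\in H_k$ is a unit vector, and split $w=(w_L,w_R)$ with $w_L$ of length $l-r$ and $w_R$ of length $m-r$. For each \emph{insertion word} $x=(x_1,\dots,x_r)\in[N]^r$ with $x_s\ne x_{s+1}$ for $1\le s\le r-1$ and $x_1\notin\{c_{l-r},c_{l-r+1}\}$ (whichever of these letters exists), put $\eta^{(x)}=e_{w_L}\otimes e_x$ and $\zeta^{(x)}=e_{\check x}\otimes e_{w_R}$. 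The concatenations $(w_L,x)$ and $(\check x,w_R)$ then have no consecutive repetitions, so $\eta^{(x)}\in H_l$ and $\zeta^{(x)}\in H_m$, and for distinct $x$ these are orthonormal systems of unit vectors. In the displayed identity only the index $j=x$ survives in the internal sum, whence
\[
(\alpha_k^{l,m})^*(\eta^{(x)}\otimes\zeta^{(x)})=\Big(\frac{[k+1]_q}{\theta_q(k,l,m)}\Big)^{1/2}p_k(e_{w_L}\otimes e_{w_R})=\Big(\frac{[k+1]_q}{\theta_q(k,l,m)}\Big)^{1/2}e_w,
\]
independently of $x$.

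It remains to count the admissible insertion words. Since $c_{l-r}\ne c_{l-r+1}$ (consecutive letters of $w$, when both are present), there are at least $N-2$ choices for $x_1$ and at least $N-1$ for each subsequent letter, hence at least $(N-2)(N-1)^{r-1}=(N-2)(N-1)^{\frac{l+m-k-2}{2}}$ admissible words. Given $d$ at most this number, pick distinct admissible $x^{(1)},\dots,x^{(d)}$ and set $\eta_i=\eta^{(x^{(i)})}$, $\zeta_i=\zeta^{(x^{(i)})}$; then
\[
\Big\|(\alpha_k^{l,m})^*\Big(\sum_{i=1}^d\eta_i\otimes\zeta_i\Big)\Big\|=\Big\|d\Big(\frac{[k+1]_q}{\theta_q(k,l,m)}\Big)^{1/2}e_w\Big\|=d\Big(\frac{[k+1]_q}{\theta_q(k,l,m)}\Big)^{1/2},
\]
using $\|e_w\|=1$. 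Finally, Remark~\ref{upbound} gives $\frac{[k+1]_q}{\theta_q(k,l,m)}\ge\frac1{[r+1]_q}$, and \eqref{ineq-qd} gives $\frac1{[r+1]_q}\ge q^r$, so $\big(\frac{[k+1]_q}{\theta_q(k,l,m)}\big)^{1/2}\ge q^{r/2}=q^{\frac{l+m-k}{4}}$, which gives the claimed lower bound.

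I expect the only real obstacle to be the combinatorial bookkeeping: one must choose the padding so that neither ``junction'' --- between $w_L$ and the insertion word, and between the insertion word and $w_R$ --- creates a consecutive repetition, which is precisely what makes $c_{l-r}$ and $c_{l-r+1}$ the two forbidden first letters, and one must then check that the resulting $\eta_i,\zeta_i$ still lie in the highest-weight subspaces $H_l, H_m$ and that the count of valid insertion words is exactly $(N-2)(N-1)^{\frac{l+m-k-2}{2}}$. Everything else is a direct reading of the identity extracted from the proof of Proposition~\ref{thm:RD} together with Remark~\ref{upbound}.
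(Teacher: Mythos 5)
Your proof is correct and follows essentially the same route as the paper: the first bound by the triangle inequality applied to Proposition~\ref{thm:RD}, and the saturation by padding a fixed non-repeating base word in $H_k$ with $(N-2)(N-1)^{r-1}$ insertion words whose images under $(\alpha_k^{l,m})^*$ all coincide, then invoking Remark~\ref{upbound} and \eqref{ineq-qd}. The only (cosmetic) difference is that the paper fixes the alternating word $e_1\otimes e_2\otimes e_1\otimes\cdots$ and requires $i(1)\ge 3$, whereas you allow an arbitrary base word and forbid the two junction letters; the count and the conclusion are identical.
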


\begin{proof}
The first inequality is simply an application of the triangle inequality to Proposition \ref{thm:RD}. We now consider the second inequality.  As in the proof of Theorem \ref{highest-weight}, consider the vector \[\eta_k(1,2) = \underbrace{e_1\otimes e_2 \otimes e_1 \otimes e_2 \otimes \ldots}_{k \text{ alternating factors}} \in H_k \subset H_1^{\otimes k}.\]  We will also write $\eta_k(1,2) = \eta_0 \otimes \zeta_0$, where $\eta_0 \in H_1^{\otimes l-r}$, $\zeta_0 \in H_1^{\otimes m-r}$ and  $k = l+m-2r$.  Now denote by $A$ the set  of functions $i:[r] \to [N]$ with the property that $i(1) \ge 3$ and with the alternating value condition $i(s) \ne i(s+1)$ for $1 \le s \le r$.  For $i \in A$, define $\eta_i \in H_1^{\otimes l}$ and $\zeta_i \in H_1^{\otimes m}$ by \[\eta_i = \eta_0 \otimes e_{i(1)} \otimes e_{i(2)} \otimes \ldots \otimes e_{i(r)} \quad \& \quad \zeta_i =  e_{i(r)} \otimes \ldots \otimes e_{i(2)} \otimes e_{i(1)}\otimes \zeta_0. \]  Note that the  families $(\eta_i)_{i\in A}$ and $(\zeta_i)_{i\in A}$ are each orthonormal systems of size $(N-2)(N-1)^{r-1}$.    As in the case of the proof of Theorem \ref{highest-weight}, one readily sees from the structure of the Jones-Wenzl projections that $\eta_i = p_l  \eta_i \in H_l$ and $\zeta_i = p_l \zeta_i  \in H_m$, which gives, for each $i \in A$,  

\begin{align*}
(\alpha_k^{l,m})^*(\eta_i \otimes \zeta_i) &= \Big(\frac{[k+1]_{q}}{\theta_{q}(k,l,m)}\Big)^{1/2} p_k\Big(\iota_{H_{l-r}} \otimes T_r^*  \otimes \iota_{m-r}\Big)(\eta_i \otimes \zeta_i) \\ &= \Big(\frac{[k+1]_{q}}{\theta_{q}(k,l,m)}\Big)^{1/2} \eta_k(1,2) \\
\implies \Big\|(\alpha_k^{l,m})^*\Big(\sum_{i \in A} \eta_i \otimes \zeta_i\Big)\Big\| &=
d \Big(\frac{[k+1]_{q}}{\theta_{q}(k,l,m)}\Big)^{1/2}.
\end{align*}
We now conclude by observing that from Remark \ref{upbound}, we have 
\[
\Big(\frac{[k+1]_{q}[r+1]_q}{\theta_{q}(k,l,m)}\Big)^{1/2} \ge 1,
\]
which yields together with inequality \eqref{ineq-qd}
\[
\Big\|(\alpha_k^{l,m})^*\Big(\sum_{i \in A} \eta_i \otimes \zeta_i\Big)\Big\| \ge d[r+1]_q^{-1/2} \ge dq^{\frac{l+m-k}{4}}.
\]
\end{proof}
From the proof of the above higher rank rapid decay inequality, we obtain the optimality result for the Schmidt coefficient bounds alluded to above.  

\begin{theorem}\label{ent-saturation}
Let $(k,l,m) \in \N_0^3$ be an admissible triple, $N \ge 3$, and $d \le (N-2)(N-1)^{\frac{l+m-k-2}{2}}$.  Then there exists a unit vector $\xi \in H_k$ such that $\alpha_k^{l,m}(\xi)$ has a singular value decomposition $\alpha_k^{l,m}(\xi) = \sum_i \sqrt{\lambda_i}e_i \otimes f_i$ with $\lambda_1 \ge \lambda_2 \ge \ldots $ satisfying  \[\lambda_1 = \lambda_2 = \ldots =\lambda_d = \frac{[k+1]_{q}}{\theta_{q}(k,l,m)}  \ge q^{\frac{l+m-k}{2}}.\]
\end{theorem}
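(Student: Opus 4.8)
The plan is to exhibit the extremal unit vector explicitly: I would take $\xi = \eta_k(1,2) = e_1 \otimes e_2 \otimes e_1 \otimes \cdots \in H_k \subset H_1^{\otimes k}$, the same vector used in the proof of Proposition \ref{higher-rank}, and compute the Schmidt decomposition of $\alpha_k^{l,m}(\xi) \in H_l \otimes H_m$ directly. Writing $\xi = \eta_0 \otimes \zeta_0$ with $\eta_0 \in H_1^{\otimes l-r}$, $\zeta_0 \in H_1^{\otimes m-r}$ and $k = l+m-2r$, and using \eqref{alpha}, $p_k\xi = \xi$, and $T_r = \sum_{a:[r]\to[N]} e_{a(1)} \otimes \cdots \otimes e_{a(r)} \otimes e_{a(r)} \otimes \cdots \otimes e_{a(1)}$, one obtains, with $c := \big([k+1]_q/\theta_q(k,l,m)\big)^{1/2}$,
\[
\alpha_k^{l,m}(\xi) = c \sum_{a:[r]\to[N]} p_l\big(\eta_0 \otimes e_{a(1)} \otimes \cdots \otimes e_{a(r)}\big) \otimes p_m\big(e_{a(r)} \otimes \cdots \otimes e_{a(1)} \otimes \zeta_0\big).
\]
For $a$ in the set $A$ from the proof of Proposition \ref{higher-rank} (those $a:[r]\to[N]$ with $a(1)\ge 3$ and consecutive values distinct), both elementary tensors above have pairwise-distinct neighbouring letters, so as in the proofs of Theorem \ref{highest-weight} and Proposition \ref{higher-rank} the Jones--Wenzl projections fix them and the corresponding summand is $\eta_a \otimes \zeta_a$, where $(\eta_a)_{a\in A}\subset H_l$ and $(\zeta_a)_{a\in A}\subset H_m$ are the orthonormal systems already used there, with $|A| = (N-2)(N-1)^{r-1} \ge d$. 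Hence $\alpha_k^{l,m}(\xi) = c\sum_{a\in A}\eta_a\otimes\zeta_a + w$, where $w$ gathers the summands with $a \notin A$.

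The crux is to show that $w$ has no cross terms with respect to the splittings $H_l = V_A \oplus V_A^\perp$ and $H_m = W_A \oplus W_A^\perp$, where $V_A = \Span\{\eta_a : a\in A\}$ and $W_A = \Span\{\zeta_a : a\in A\}$; that is, $w \in V_A^\perp \otimes W_A^\perp$. Since each $\eta_b$ ($b\in A$) is fixed by $p_l$ and $p_l$ is self-adjoint, $\langle \eta_b, p_l(\eta_0 \otimes e_{a(1)} \otimes \cdots \otimes e_{a(r)})\rangle$ equals the elementary-tensor inner product $\prod_{s=1}^r \delta_{b(s),a(s)} = \delta_{a,b}$, which is $0$ for $a \notin A$; likewise $\langle \zeta_b, p_m(e_{a(r)} \otimes \cdots \otimes e_{a(1)} \otimes \zeta_0)\rangle = \delta_{a,b} = 0$ for $a\notin A$. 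Therefore $w \perp V_A \otimes H_m$ and $w \perp H_l \otimes W_A$, which together force $w \in V_A^\perp \otimes W_A^\perp$.

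It then follows that $\alpha_k^{l,m}(\xi)$ splits as $\psi_1 + \psi_2$ with $\psi_1 = c\sum_{a\in A}\eta_a\otimes\zeta_a \in V_A\otimes W_A$ and $\psi_2 = w \in V_A^\perp\otimes W_A^\perp$. Because the left supports of $\psi_1,\psi_2$ are mutually orthogonal and likewise the right supports, the reduced density operator $\text{Tr}_{H_m}|\alpha_k^{l,m}(\xi)\rangle\langle\alpha_k^{l,m}(\xi)|$ is block diagonal and its spectrum is the disjoint union of the spectra coming from $\psi_1$ and $\psi_2$. As $(\eta_a)$ and $(\zeta_a)$ are orthonormal, $\psi_1$ is already in Schmidt form with all $|A|$ of its singular values equal to $c$; hence $\alpha_k^{l,m}(\xi)$ (a unit vector, $\alpha_k^{l,m}$ being an isometry) has at least $|A|\ge d$ Schmidt coefficients equal to $c^2 = [k+1]_q/\theta_q(k,l,m)$. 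Proposition \ref{thm:RD} caps every Schmidt coefficient of $\alpha_k^{l,m}(\xi)$ at $c^2$, so the top $d$ of them are exactly $c^2$; and $c^2 \ge q^{(l+m-k)/2}$ is Remark \ref{upbound} combined with \eqref{ineq-qd}.

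I expect the cross-term vanishing of $w$ to be the only genuine obstacle. This is precisely the point at which one must exploit that the basis words involved are fixed by the Jones--Wenzl projections, rather than merely the triangle and Cauchy--Schwarz estimates of Proposition \ref{higher-rank}; it is what upgrades a single near-extremal inner product into exact control of an entire block of Schmidt coefficients. Everything else is either the explicit recursion-based computation of $\alpha_k^{l,m}(\xi)$ or bookkeeping with orthonormal systems.
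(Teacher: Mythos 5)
Your proof is correct and follows essentially the same route as the paper's: the same extremal vector $\eta_k(1,2)=\eta_0\otimes\zeta_0$, the same split of $\alpha_k^{l,m}(\xi)$ into the $a\in A$ part (already in Schmidt form with $|A|\ge d$ coefficients equal to $[k+1]_q/\theta_q(k,l,m)$) and a remainder, and the same appeal to Proposition \ref{thm:RD} to cap all Schmidt coefficients at that value. Your self-adjointness argument showing $w\perp V_A\otimes H_m$ and $w\perp H_l\otimes W_A$ is precisely the content of the paper's briefer assertion that ``by construction'' $\eta_i\perp p_l(\eta_0\otimes e_j)$ and $\zeta_i\perp p_m(e_{\check{j}}\otimes\zeta_0)$ for $i\in A$, $j\notin A$ --- just spelled out more carefully.
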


\begin{proof}
We first observe that from Proposition \ref{thm:RD}, $\lambda_1 \le  \frac{[k+1]_{q}}{\theta_{q}(k,l,m)}$ for each unit vector $\xi \in H_k$.  To show that this bound is obtained as claimed above, we freely use the notation of Proposition \ref{higher-rank} and its proof.  Let $r= \frac{l+m-k}{2}$ and take $\xi = \eta_k(1,2)=\eta_0 \otimes \zeta_0 \in H_k$ as defined above.  Recalling the definitions of the vectors $(\eta_i)_{i \in A}$ and $(\zeta_i)_{i \in A}$, we then have  
\begin{align*}\alpha_k^{l,m}(\xi) &= \Big(\frac{[k+1]_{q}}{\theta_{q}(k,l,m)}\Big)^{1/2} (p_l \otimes p_m)\Big(\iota_{H_{l-r}} \otimes T_r  \otimes \iota_{m-r}\Big)\xi \\
&=\sum_{i:[r] \to [n]}\Big(\frac{[k+1]_{q}}{\theta_{q}(k,l,m)}\Big)^{1/2} (p_l \otimes p_m)\big(\eta_0 \otimes e_i \otimes e_{\check{i}}\otimes \zeta_0\big) \\
&=\sum_{i \in A} \Big(\frac{[k+1]_{q}}{\theta_{q}(k,l,m)}\Big)^{1/2} \eta_i \otimes \zeta_i +  \sum_{\substack{i:[r] \to [n]\\ i \notin A}}\Big(\frac{[k+1]_{q}}{\theta_{q}(k,l,m)}\Big)^{1/2} (p_l \otimes p_m)\big(\eta_0 \otimes e_i \otimes e_{\check{i}}\otimes \zeta_0\big) \\
\end{align*}
Performing a singular value decomposition on the second sum above, we obtain orthonormal systems $(e_j)_j \subset \text{span}(p_l(\eta_0 \otimes e_i))_{i \notin A} \subset H_l$ and  $(f_j)_j \subset \text{span}(p_m(e_{\check{i}} \otimes \zeta_0))_{i \notin A} \subset H_m$ and Schmidt coefficients $\lambda_j \ge 0$, so that 
\[\alpha_k^{l,m}(\xi) = \sum_{i \in A} \Big(\frac{[k+1]_{q}}{\theta_{q}(k,l,m)}\Big)^{1/2} \eta_i \otimes \zeta_i +  \sum_j \sqrt{\lambda_j} e_j \otimes f_j. \]
The theorem will now follow from the above expression once we observe that this expression is precisely the singular value decomposition of $\alpha_k^{l,m}(\xi)$.  This latter fact is evident because, by construction, $\eta_i \perp p_l(\eta_0 \otimes e_j)$ and $\zeta_i \perp p_m(e_{\check{j}}\otimes \zeta_0)$ for each $i \in A$ and $j \notin A$.
\end{proof}

\begin{remark} \label{rem-optimal}
The number $|A| = (N-2)(N-1)^{\frac{m+l-k-2}{2}}$ of maximal Schmidt coefficients $\lambda_{\max} = \frac{[k+1]_{q}}{\theta_{q}(k,l,m)}$ obtained in Theorem \ref{ent-saturation} is asymptotically maximal in the sense that
\[
\lim_{N \to \infty} |A| \frac{[k+1]_{q}}{\theta_{q}(k,l,m)} = 1.
\]
This shows that in the limit as $N \to \infty$, the vector $\xi \in H_k$ constructed in Theorem \ref{ent-saturation} becomes maximally entangled, with the bulk of its Schmidt coefficients equaling the maximal value $\lambda_{\max}$ allowed by Proposition \ref{thm:RD}.
\end{remark}

\section{Applications} \label{ap}

In this section we consider some applications of the entanglement results of the preceding section to study the outputs of the canonical quantum channels related to our subspaces.  We also construct some new examples of $d$-positive, but not completely positive maps on matrix algebras.

\subsection{$O_N^+$-equivariant quantum channels} \label{app}

Following Section \ref{prelim}, we form, for any admissible triple $(k,l,m) \in \N_0^3$, the complementary pair of quantum channels \[\Phi_k^{\overline{l}, m}: \mc B(H_k) \to \mc B(H_m); \qquad \rho \mapsto (\text{Tr} \otimes \iota)(\alpha_{k}^{l,m} \rho (\alpha_{k}^{l,m})^*),\]
\[\Phi_k^{l, \overline{m}}: \mc B(H_k) \to \mc B(H_l); \qquad \rho \mapsto (\iota \otimes \text{Tr})(\alpha_{k}^{l,m} \rho (\alpha_{k}^{l,m})^*)\]

We then have the following proposition concerning the  $\mc S_1 \to \mc S_\infty$ behavior of these channels. 

\begin{proposition} \label{prop:normbound}
Given any admissible triple $(k,l,m) \in \N_0^3$ and $N \ge 3$,  we have
\begin{align*}&\|\Phi_k^{\overline l,m}\|_{\mc S_1(H_k) \to \mc B(H_m)}=\|\Phi_k^{l, \overline m
}\|_{\mc S_1(H_k) \to \mc B(H_l)} \\
&=  \frac{[k+1]_q}{\theta_q(k,l,m)} \in \big[q^{\frac{l+m-k}{2}}, C(q)^2q^{\frac{l+m-k}{2}}\big].
\end{align*}
\end{proposition}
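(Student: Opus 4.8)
The plan is to compute the $\mc S_1 \to \mc S_\infty$ norm directly from its definition, using the fact that the norm is attained on pure states. First I would note that for any channel of the form $\Phi(\rho) = (\Tr_{H_B} \otimes \iota)(\alpha \rho \alpha^*)$ with $\alpha : H_A \to H_B \otimes H_C$ an isometry, the $\mc S_1 \to \mc S_\infty$ norm equals $\sup_{\|\rho\|_{\mc S_1} = 1} \|\Phi(\rho)\|_{\mc B(H_C)}$, and since $\mc D(H_A)$ spans the unit ball of $\mc S_1(H_A)^+$ and $\|\cdot\|_{\mc B}$ is convex, the supremum is attained at a pure state $\rho = |\xi\rangle\langle\xi|$ with $\|\xi\| = 1$. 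For a pure state, $\Phi(|\xi\rangle\langle\xi|) = (\Tr_{H_B}\otimes\iota)(|\alpha\xi\rangle\langle\alpha\xi|)$ is the reduced density matrix of $\alpha\xi$, whose operator norm is exactly $\lambda_1(\alpha\xi)$, the largest Schmidt coefficient of $\alpha\xi \in H_B \otimes H_C$. Thus $\|\Phi_k^{\overline l, m}\|_{\mc S_1 \to \mc B} = \sup_{\xi \in H_k, \|\xi\|=1} \lambda_1(\alpha_k^{l,m}(\xi))$, and the same argument applied to the partial trace over $H_C$ gives the identical quantity for $\Phi_k^{l,\overline m}$ (the nonzero Schmidt coefficients of a bipartite vector are symmetric between the two legs), establishing the claimed equality of the two norms.

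Next I would identify this supremum of largest Schmidt coefficients. The upper bound $\sup_\xi \lambda_1(\alpha_k^{l,m}(\xi)) \le \frac{[k+1]_q}{\theta_q(k,l,m)}$ is exactly the content of \eqref{SC-bound}, which follows from Proposition \ref{thm:RD} (writing the largest Schmidt coefficient as $\sup_{\eta,\zeta}|\langle \alpha_k^{l,m}(\xi) | \eta \otimes \zeta\rangle|^2$ over unit vectors $\eta \in H_l$, $\zeta \in H_m$). For the matching lower bound, I would invoke Theorem \ref{ent-saturation} with $d = 1$ (valid for all $N \ge 3$ since $(N-2)(N-1)^{(l+m-k-2)/2} \ge 1$ whenever the triple is admissible, as $l + m - k = 2r \ge 2$ unless $r = 0$, in which case $k = l+m$ and the statement degenerates — I should check this boundary case separately, where $\alpha$ maps onto a separable subspace and the reduced density matrix can be made rank one, giving norm $1 = [k+1]_q/\theta_q(k,l,l+m)$ since $\theta_q(l+m,l,m) = [l+m+1]_q = [k+1]_q$). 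Theorem \ref{ent-saturation} produces a unit vector $\xi$ with $\lambda_1(\alpha_k^{l,m}(\xi)) = \frac{[k+1]_q}{\theta_q(k,l,m)}$, so the supremum is exactly this value.

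Finally, the two-sided estimate $\frac{[k+1]_q}{\theta_q(k,l,m)} \in [q^{(l+m-k)/2}, C(q)^2 q^{(l+m-k)/2}]$ is immediate: the lower bound is the inequality $\frac{[k+1]_q}{\theta_q(k,l,m)} \ge q^{l+m-k\over 2}$ already proved inside Theorem \ref{ent-saturation} (combining Remark \ref{upbound} with \eqref{ineq-qd}), and the upper bound is the squared second inequality of Proposition \ref{thm:RD}. I expect the main obstacle to be purely expository rather than mathematical: one must be careful to justify cleanly that the operator norm of a reduced density matrix is the top Schmidt coefficient, that the norm is attained on pure states (a convexity argument), and that the partial trace over either leg yields the same spectrum of nonzero eigenvalues — these are all standard but worth stating precisely. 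The degenerate case $k = l+m$ (where $r=0$, $\alpha_k^{l,m}$ is just $p_l \otimes p_m$ restricted, and the subspace is separable) should be mentioned so that the bound $[q^{0}, C(q)^2 q^0] = [1, C(q)^2]$ correctly contains the value $1$.
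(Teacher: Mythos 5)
Your proposal is correct and follows essentially the same route as the paper: reduce to pure states by convexity, identify $\|\Phi_k^{\overline l,m}(|\xi\rangle\langle\xi|)\|$ with the largest Schmidt coefficient of $\alpha_k^{l,m}(\xi)$, bound it above via Proposition \ref{thm:RD}, and attain the bound via Theorem \ref{ent-saturation}. Your separate treatment of the boundary case $r=0$ (where the hypothesis $d\le (N-2)(N-1)^{(l+m-k-2)/2}$ of Theorem \ref{ent-saturation} fails for $d=1$ when $N=3$, but the norm is trivially $1=[k+1]_q/\theta_q(k,l,m)$ by separability) is a small but genuine point of care that the paper's proof glosses over.
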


\begin{proof}
We shall only consider $\Phi_{k}^{\overline l, m}$ as the proof of the other case is identical.   To prove the upper bound $\|\Phi_k^{\overline l, m
}\|_{\mc S_1(H_k) \to \mc B(H_m)} \le  \frac{[k+1]_q}{\theta_q(k,l,m)}$, note that by  completely positivity, convexity and the triangle inequality, it suffices to consider a pure state $\rho =  | \xi \rangle \langle \xi| \in \mc D(H_k)$ and show that $\|\Phi_k^{\overline l, m
}(\rho)\|_{\mc B(H_m)} \le  \frac{[k+1]_q}{\theta_q(k,l,m)}$.  But in this case, we have 
\begin{align*}
\Phi_k^{\overline l, m }(\rho) = (\text{Tr} \otimes \iota)(| \alpha_k^{l,m}\xi \rangle \langle  \alpha_k^{l,m}\xi|) = \sum_i \lambda_i  | f_i \rangle \langle f_i|,
\end{align*}
where $\alpha_k^{l,m}(\xi) = \sum_{i} \sqrt{\lambda_i} e_i \otimes f_i$ is the corresponding singular value decomposition.  In particular, $\|\Phi_k^{\overline l, m
}(\rho)\|_{\mc B(H_m)} = \max_i \lambda_i$, which by Proposition \ref{thm:RD} is bounded above by $\frac{[k+1]_q}{\theta_q(k,l,m)}$.  This upper bound is obtained by taking $\rho = |\xi \rangle \langle \xi |$, where $\xi$ satisfies the hypotheses of Theorem \ref{ent-saturation}.
\end{proof}

The preceeding norm computation for the channels $\Phi_{k}^{\overline l, m}, \Phi_{k}^{ l, \overline m}$ allows for an easy estimate of a  lower bound on their minimum output entropies.

\begin{corollary}\label{MOE-bounds}
Given any admissible triple $(k,l,m) \in \N_0^3$ and $N \ge 3$, we have
\[H_{\min}(\Phi_k^{\overline l, m}), H_{\min}(\Phi_k^{l, \overline m}) \geq \log\Big(\frac{\theta_q(k,l,m)}{[k+1]_q}\Big) \ge -\Big(\frac{l+m-k}{2}\Big) \log(q) - 2\log (C(q)).\]
\end{corollary}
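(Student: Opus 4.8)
The plan is to deduce the minimum output entropy bound directly from the singular-value information already established. The starting point is the observation made just before Proposition \ref{prop:normbound}: for a pure input state $\rho = |\xi\rangle\langle\xi|$ with $\alpha_k^{l,m}(\xi) = \sum_i \sqrt{\lambda_i}\, e_i \otimes f_i$, the output $\Phi_k^{\overline l, m}(\rho) = \sum_i \lambda_i |f_i\rangle\langle f_i|$ has spectrum $(\lambda_i)_i$, so $H(\Phi_k^{\overline l, m}(\rho)) = -\sum_i \lambda_i \log \lambda_i$ is exactly the Shannon entropy of the Schmidt coefficients of $\alpha_k^{l,m}(\xi)$. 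Since von Neumann entropy is convex and the MOE is therefore attained on pure states, it suffices to bound $-\sum_i \lambda_i \log\lambda_i$ from below uniformly over unit vectors $\xi \in H_k$.

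First I would invoke the elementary entropy inequality $H(\lambda) = -\sum_i \lambda_i \log \lambda_i \ge -\log(\max_i \lambda_i) = -\log \lambda_1$, valid for any probability vector $(\lambda_i)_i$ (it is just $-\sum_i \lambda_i \log\lambda_i \ge -\sum_i \lambda_i \log \lambda_1 = -\log\lambda_1$ since $\log$ is increasing). Then I apply the Schmidt coefficient bound from Proposition \ref{thm:RD}, namely $\lambda_1 \le \frac{[k+1]_q}{\theta_q(k,l,m)}$, which gives
\[
H(\Phi_k^{\overline l, m}(\rho)) \ge -\log\Big(\frac{[k+1]_q}{\theta_q(k,l,m)}\Big) = \log\Big(\frac{\theta_q(k,l,m)}{[k+1]_q}\Big).
\]
Taking the minimum over pure states $\rho$ yields $H_{\min}(\Phi_k^{\overline l, m}) \ge \log\big(\theta_q(k,l,m)/[k+1]_q\big)$, and the identical argument (or the symmetry noted in Proposition \ref{prop:normbound}) handles $\Phi_k^{l,\overline m}$.

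For the second, cruder inequality in the statement, I would combine this with the numerical estimate already packaged in Proposition \ref{thm:RD} (or equivalently Proposition \ref{prop:normbound}): $\frac{[k+1]_q}{\theta_q(k,l,m)} \le C(q)^2 q^{\frac{l+m-k}{2}}$. Taking $\log$ and negating,
\[
\log\Big(\frac{\theta_q(k,l,m)}{[k+1]_q}\Big) \ge -\log\big(C(q)^2 q^{\frac{l+m-k}{2}}\big) = -\Big(\frac{l+m-k}{2}\Big)\log q - 2\log C(q),
\]
which is the asserted bound. There is no real obstacle here — the corollary is essentially a one-line consequence of Proposition \ref{thm:RD} together with convexity of entropy; the only things to be careful about are that the MOE reduces to pure states (already argued in the preliminaries) and that the channel output really does have the Schmidt coefficients as its spectrum (also already recorded). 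If anything, the mildly delicate point is simply citing the right already-proven fact at each step rather than any genuine computation.
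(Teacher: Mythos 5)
Your argument is correct and is essentially the paper's own proof: both rest on the elementary bound $H(\Phi(\rho)) \ge -\log(\max_i \lambda_i)$ together with the Schmidt-coefficient estimate $\lambda_1 \le \frac{[k+1]_q}{\theta_q(k,l,m)} \le C(q)^2 q^{\frac{l+m-k}{2}}$. The only cosmetic difference is that the paper phrases the first step as $-\log\|\Phi\|_{\mc S_1 \to \mc B(K)}$ and cites Proposition \ref{prop:normbound}, whereas you cite Proposition \ref{thm:RD} directly; since Proposition \ref{prop:normbound} is itself proved from Proposition \ref{thm:RD}, these are the same argument.
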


\begin{proof}Given a quantum channel $\Phi: \mc B(H) \to \mc B(K)$ and $\rho \in \mc D(H)$, we note that $H(\Phi(\rho))  = -\sum_i  \lambda_i \log \lambda_i$, where $(\lambda_i)_i$ is the spectrum of $\Phi(\rho)$.  In particular, we have the estimate 
\[
H(\Phi(\rho)) \ge -\log \Big( \max_i\lambda_i  \Big) = -\log\|\Phi(\rho)\|_{\mc B(K)} \ge  -\log\|\Phi\|_{\mc S_1(H) \to \mc B(K)}.\] 
The first inequality in the corollary now follows immediately from Proposition \ref{prop:normbound}.  The second inequality is just a consequence of the inequality $\frac{[k+1]_q}{\theta_q(k,l,m)} \le C(q)^2q^{\frac{l+m-k}{2}}$.
\end{proof}

\begin{remark}
The above estimates show that for $N$ large and $k < l+m$ fixed, the  minimum output entropy of the channels is quite large and grows logarithmically in $N$.     These estimates stand in contrast to what happens in the case of the $SU(2)$-equivariant quantum channels studied by Al Nuwairan in \cite[Section 3]{Al13}. 

In the case where $k = l+m$ (the highest weight case),  we note that  \[H_{\min}(\Phi_k^{\overline l, m}) =  H_{\min}(\Phi_k^{l, \overline m}) = 0, \]
which follows from the fact that $\alpha_{k}^{l,m}(H_{k}) \subseteq H_l \otimes H_m$ is a separable subspace (cf. Theorem \ref{highest-weight}).  Indeed, in this case, there exist unit vectors $\xi \in H_k$, $\eta \in H_l$ and $\zeta \in H_m$ with $\alpha_{k}^{l,m}(\xi) = \eta \otimes \zeta \in H_l \otimes H_m$.  We then have \begin{align*}
H_{\min}(\Phi_k^{\overline l, m}) &\le H(\Phi_k^{\overline l, m}(|\xi\rangle \langle \xi |))= H(|\zeta\rangle \langle \zeta |))  = 0, \\
H_{\min}(\Phi_k^{ l, \overline m}) &\le H(\Phi_k^{l, \overline m}(|\xi\rangle \langle \xi |))= H(|\eta\rangle \langle \eta |))  = 0.
\end{align*}
\end{remark}

\begin{remark}
We expect that the lower bound for the minimum output entropies given in Corollary \ref{MOE-bounds} to be asymptotically optimal as $N \to \infty$, 
at least in some cases (e.g. $m$ fixed).  Evidence for this is provided by Theorem \ref{ent-saturation} and Remark \ref{rem-optimal}, which shows that $\alpha_k^{l,m}(H_k)$ contains unit vectors which are asymptotically maximally entangled with the bulk of their Schmidt coefficients equal to $\frac{[k+1]_q}{\theta_q(k,l,m)}$.  

On the other hand, we can not exclude that the bound is not tight in general, and if the bound could be significantly improved, it could have potential applications towards the problem of constructing a non-random example of a quantum channel that is not MOE additive.   At this stage, however, we are unable to complete the proof that the bound is tight in full generality and leave it as an open question.  
\end{remark}

\begin{remark}
Under some asymptotic regimes (typically, $m$ fixed, $k,N\to\infty$ independently, and any $l$ that makes the triple
$k,l,m$ admissible), 
one can use techniques developed in \cite{BeCoNe12} to describe exactly the image of pure states in $\mc D(H_k)$
under $\Phi_k^{\overline l, m}$. We leave these investigations for future work.
\end{remark}

\subsection{Positive maps} \label{positive}
In this final section we indicate how our representation theoretic model for highly entangled subspaces can be used to construct non-random examples of $d$-positive maps between matrix algebras that are not completely positive. 

First we recall the definition of the Choi matrix of a linear map $\Phi: \mc B(H_A)\to \mc B(H_B)$: it is the matrix $C_\Phi \in \mc B (H_A\otimes H_B)$ 
given by 
$$C_\Phi = \sum_{i,j} e_{ij} \otimes \Phi (e_{ij}),$$
where $e_{ij}$ are the canonical matrix units of $\mc B(H_A)$.
The important property of $C_\Phi$ is that $\Phi$ is completely positive if and only if $C_\Phi$ is positive semidefinite \cite{Ch75}.  Moreover, $C_\Phi$ can be used to detect whether or not $\Phi$ is $d$-positive for any $d \in \N$ \cite{HoLiPoQiSz}:  $\Phi$ is $d$-positive if and only if 
\[\langle C_{\Phi_t} x|x\rangle \ge 0\] 
for all $x  \in H_A \otimes H_B$ with {\it Schmidt rank} bounded above by $d$.  (That is, $x$ admits a singular value decomposition $x = \sum_{i=1}^s \sqrt{\lambda_i} e_i \otimes f_i$ with $\min_i \lambda_i > 0$ and $s \le d$).

Let us now return our usual setup of an admissible triple $(k,l,m) \in \N_0^3$ corresponding to a non-highest-weight inclusion $\alpha_k^{l,m}:H_k \hookrightarrow H_l\otimes H_m$ of irreducible representations of $O_N^+$, $N \ge 3$.  For each $t\ge 0$, we consider the linear map $\Phi_t:  \mc B(H_l)\to \mc B (H_m)$ whose Choi matrix is given by $$C_{\Phi_t} = \iota_{H_l\otimes H_m}-t\alpha_{k}^{l,m}(\alpha_{k}^{l,m})^*.$$

Evidently ${\Phi_t}$ is completely positive if and only if $t \le 1$.  On the other hand, we can prove the following result.
\begin{theorem}
For each $d \in \N$, the map $\Phi_t: \mc B(H_l) \to \mc B(H_m)$ is $d$-positive (but not completely positive) if and only if \[1 < t \le \frac{\theta_{q}(k,l,m)}{d[k+1]_{q}} \le  C(q)^{-2}q^{-\frac{l+m-k}{2}}d^{-1}.\] 
\end{theorem}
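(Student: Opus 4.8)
The plan is to characterize $d$-positivity of $\Phi_t$ via the Schmidt-rank criterion recalled just above the statement, namely that $\Phi_t$ is $d$-positive if and only if $\langle C_{\Phi_t}x|x\rangle \ge 0$ for all $x \in H_l \otimes H_m$ of Schmidt rank at most $d$. Writing $P = \alpha_k^{l,m}(\alpha_k^{l,m})^*$ for the orthogonal projection onto $\alpha_k^{l,m}(H_k)$, the condition $\langle C_{\Phi_t}x|x\rangle \ge 0$ becomes $\|x\|^2 \ge t\,\|Px\|^2$ for all such $x$. Hence $\Phi_t$ is $d$-positive precisely when $t \le \big(\sup\{\|Px\|^2 : \|x\|=1,\ \mathrm{Schmidt\ rank}(x)\le d\}\big)^{-1}$, so the whole problem reduces to computing that supremum, call it $\mu_d$.

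The key step is therefore to show $\mu_d = d\,\frac{[k+1]_q}{\theta_q(k,l,m)}$. For the upper bound $\mu_d \le d\,[k+1]_q/\theta_q(k,l,m)$: given $x = \sum_{i=1}^{d}\sqrt{\lambda_i}\,e_i\otimes f_i$ with orthonormal $(e_i),(f_i)$ and $\sum_i\lambda_i = 1$, I would write $\|Px\|^2 = \|(\alpha_k^{l,m})^*x\|^2$ and apply the higher-rank inequality of Proposition \ref{higher-rank} (or directly Proposition \ref{thm:RD} termwise with Cauchy--Schwarz): $\|(\alpha_k^{l,m})^*x\| \le (\,[k+1]_q/\theta_q(k,l,m)\,)^{1/2}\sum_i \sqrt{\lambda_i}\,\|e_i\|\|f_i\| = (\,[k+1]_q/\theta_q(k,l,m)\,)^{1/2}\sum_i\sqrt{\lambda_i} \le (\,[k+1]_q/\theta_q(k,l,m)\,)^{1/2}\sqrt{d}$, using $\sum_i\sqrt{\lambda_i}\le\sqrt{d}$ by Cauchy--Schwarz; squaring gives the bound. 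For the matching lower bound, I would invoke Theorem \ref{ent-saturation}: provided $d \le (N-2)(N-1)^{\frac{l+m-k-2}{2}}$, there is a unit vector $\xi \in H_k$ with $\alpha_k^{l,m}(\xi)$ having $d$ top Schmidt coefficients each equal to $[k+1]_q/\theta_q(k,l,m)$; taking $x$ to be the normalized truncation of $\alpha_k^{l,m}(\xi)$ to those $d$ terms, which has Schmidt rank $\le d$, realizes $\|Px\|^2 = d\,[k+1]_q/\theta_q(k,l,m)$, so $\mu_d$ is attained. This proves $t$ is admissible iff $1 < t \le \theta_q(k,l,m)/(d[k+1]_q)$, with the left inequality recording that $\Phi_t$ fails complete positivity exactly for $t>1$ (since $C_{\Phi_t}\ge 0 \iff tP\le \iota \iff t\le 1$). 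The final inequality $\theta_q(k,l,m)/(d[k+1]_q) \le C(q)^{-2}q^{-\frac{l+m-k}{2}}d^{-1}$ is just the reciprocal of the estimate $[k+1]_q/\theta_q(k,l,m)\le C(q)^2 q^{\frac{l+m-k}{2}}$ from Proposition \ref{thm:RD}.

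The main obstacle is the small technical gap around the hypothesis $d \le (N-2)(N-1)^{\frac{l+m-k-2}{2}}$: Theorem \ref{ent-saturation} only produces the requisite saturating vector under that cardinality constraint, so strictly speaking the ``only if'' direction (sharpness of the bound on $t$) is established in that regime; I would either carry this hypothesis into the statement or note that for the purpose of exhibiting $d$-positive-but-not-CP examples one simply fixes $d$ and takes $N$ large. A secondary, purely bookkeeping, point is to double-check that the normalized truncation $x$ of $\alpha_k^{l,m}(\xi)$ indeed has Schmidt rank exactly $d$ (so the strict positivity $\min_i\lambda_i>0$ in the Schmidt-rank definition is met) and that restricting attention to pure-state-type $x$ loses nothing — but this is immediate since the Schmidt-rank criterion is already phrased in terms of arbitrary vectors $x$, and the supremum defining $\mu_d$ is over exactly that set.
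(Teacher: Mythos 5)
Your proposal is correct and follows essentially the same route as the paper: the ``if'' direction is the first inequality of Proposition \ref{higher-rank} combined with Cauchy--Schwarz on the Schmidt coefficients, and the ``only if'' direction uses the saturating vector $\sum_{i}\eta_i\otimes\zeta_i$ built from the orthonormal systems of Proposition \ref{higher-rank}/Theorem \ref{ent-saturation} (your normalized truncation of $\alpha_k^{l,m}(\xi)$ is the same vector up to scaling). You are also right to flag that the sharpness direction genuinely requires $d \le (N-2)(N-1)^{\frac{l+m-k-2}{2}}$ --- a hypothesis the paper's own statement omits even though its proof invokes exactly that part of Proposition \ref{higher-rank} --- so carrying that constraint (or fixing $d$ and taking $N$ large) is the honest way to state the result.
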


\begin{proof}
We have already observed that $\Phi_t$ is not completely positive when $t >1$.  Now fix $d \in \N$ and $x = \sum_{i=1}^s \sqrt{\lambda_i} e_i \otimes f_i \in H_l \otimes H_m$ with Schmidt-rank at most $d$.  Using the first inequality of Proposition \ref{higher-rank} and the Cauchy-Schwarz inequality, we have 
\begin{align*}
\langle C_{\Phi_t} x|x\rangle &= \|x\|^2 - t \langle\alpha_{k}^{l,m}(\alpha_{k}^{l,m})^* (x)|x\rangle \\
&\ge  \|x\|^2 - t  \frac{[k+1]_{q}}{\theta_{q}(k,l,m)} \Big(\sum_{1 \le i \le s} \sqrt{\lambda_i}\Big)^2 \\
&\ge  \|x\|^2 - t  \frac{[k+1]_{q}}{\theta_{q}(k,l,m)} s\|x\|^2 \\
&\ge \|x\|^2\Big(1-t d\frac{[k+1]_{q}}{\theta_{q}(k,l,m)} \Big).
\end{align*}
From this inequality, we obtain $d$-positivity of $\Phi_t$ provided $1-t d\frac{[k+1]_{q}}{\theta_{q}(k,l,m)} \ge 0$, as claimed.

Now assume that $t > \frac{\theta_{q}(k,l,m)}{d[k+1]_{q}}$.  To show that $\Phi_t$ is not $d$-positive, let $x = \sum_{i=1}^d \eta_i \otimes \zeta_i \in H_l \otimes H_m$ be the vector appearing in the second inequality in the statement of Proposition \ref{higher-rank}.  Then the Schmidt-rank of $x$ is $d$ and we have
\begin{align*}
\langle C_{\Phi_t} x|x\rangle &= \|x\|^2 - t \langle\alpha_{k}^{l,m}(\alpha_{k}^{l,m})^* (x)|x\rangle= d -  t  \frac{[k+1]_{q}}{\theta_{q}(k,l,m)} d^2< 0,
\end{align*}  
from which we can conclude.
\end{proof}

\begin{remark}
The above theorem can readily be used to construct maps on matrix algebras that are $d$ positive but not $d+1$ positive.  Indeed, one just has to choose $t>1$, $N \ge 3$ and an admissible triple $(k,l,m) \in \N_0^3$ so that 
\[
\frac{\theta_{q}(k,l,m)}{(d+1)[k+1]_{q}} < t \le \frac{\theta_{q}(k,l,m)}{d[k+1]_{q}}.
\]
Then the corresponding $\Phi_t$ will do the job.
\end{remark}

\bibliographystyle{alpha}
\bibliography{brannan-biblio}

\end{document}